\newif\ifdraft\draftfalse%
\newif\ifanon\anonfalse%
\newif\iffull\fullfalse%

\newif\iflongrefs\longrefsfalse%
\newif\ifbackref\backreffalse%
\newif\ifsooner\soonerfalse%
\newif\iflater\laterfalse%
\newif\ifcamera\camerafalse%
\newif\ifneedspace\needspacefalse%
  \documentclass[10pt, conference, compsocconf, times]{IEEEtrancustom}
\IEEEoverridecommandlockouts%

\renewcommand{\ttdefault}{cmtt}

\DeclareMathAlphabet{\mathit}{\encodingdefault}{\familydefault}{m}{it}

\usepackage[numbers,sort]{natbib}

\usepackage{amsmath,amssymb,amsfonts,amsthm}
\usepackage{algorithmic}
\usepackage{graphicx}
\usepackage{textcomp}
\usepackage[dvipsnames]{xcolor}
\usepackage{enumitem}
\usepackage{combelow}

\usepackage[frozencache]{minted}

\definecolor{dkblue}{rgb}{0,0.1,0.5}
\definecolor{dkgreen}{rgb}{0,0.3,0}
\definecolor{dkred}{rgb}{0.6,0,0}
\definecolor{dkpurple}{rgb}{0.7,0,1.0}
\definecolor{purple}{rgb}{0.9,0,1.0}
\definecolor{olive}{rgb}{0.4, 0.4, 0.0}
\definecolor{teal}{rgb}{0.0,0.4,0.4}
\definecolor{azure}{rgb}{0.0, 0.5, 1.0}
\definecolor{gray}{rgb}{0.5, 0.5, 0.5}
\definecolor{dkgray}{rgb}{0.3, 0.3, 0.3}

\ExplSyntaxOn
\NewDocumentCommand{\code}{m}{
  \tl_set:Nn \l_tmpa_tl {#1}
  \tl_replace_all:Nnn \l_tmpa_tl {'} {\textquotesingle}
  \texttt{\l_tmpa_tl}
}
\ExplSyntaxOff
\newenvironment{mprules}{\[\hbox\bgroup\footnotesize\def\mpsep{&}$\begin{array}{@{}r@{\;}c@{\;}l@{}}}{\end{array}$\egroup\]}
\newcommand{\mpsep}{\;}
\newcommand{\mprule}[3]{\left[\begin{array}{@{}l@{}}#2\end{array}\right] \mpsep \mathtt{#1} \mpsep \left[\begin{array}{@{}l@{}}#3\end{array}\right]}
\newcommand{\highlight}[1]{{\setlength{\fboxsep}{1pt}\colorbox{Goldenrod!35}{\(#1\)}}}

\usepackage{tikz}
\usetikzlibrary{positioning, arrows.meta, bending, calc, decorations.pathreplacing}

\usepackage{subcaption}
\usepackage{mymacros}
\usepackage{mycmds}

\theoremstyle{plain}
\newtheorem{theorem}{Theorem}
\newtheorem{lemma}{Lemma}

\theoremstyle{definition}
\newtheorem{definition}{Definition}
\newtheorem{example}{Example}
\newtheorem{proposition}{Proposition}

\usepackage[
    pdftex,%
    pdfpagelabels,%
    colorlinks,%
    linkcolor=dkblue,%
    citecolor=dkblue,%
    filecolor=dkblue,%
    urlcolor=dkblue%
    \ifbackref,backref\fi%
]{hyperref}
\usepackage{cleveref}

\definecolor{ann0}{HTML}{6E6E6E}
\definecolor{ann1}{HTML}{1B7A3D}
\definecolor{ann2}{HTML}{A03A00}
\definecolor{ann3}{HTML}{1B4F9C}
\definecolor{ann4}{HTML}{6B2D8E}
\definecolor{ann5}{HTML}{A8730A}
\definecolor{ann6}{HTML}{0E7C86}
\definecolor{ann7}{HTML}{A32A5B}
\makeatletter
\NewDocumentCommand{\anndecl}{O{} m m m}{%
  \edef\ann@hue{#1}\ifx\ann@hue\@empty\edef\ann@hue{#3}\fi
  \Hy@raisedlink{\hyper@anchorstart{ann.#2}\hyper@anchorend}%
  \protected@write\@auxout{}{\string\newlabel{#2}{{#3#4}{\thepage}{#3#4}{ann.#2}{}}}%
  \protected@write\@auxout{}{\string\newlabel{#2@cref}%
    {{[ann][#3#4][]\string\annrefstyle{\ann@hue}{#3#4}}{[1][\thepage][]\thepage}{}{}{}}}%
}
\makeatother
\newcommand{\annrefstyle}[2]{{\color{ann#1}\textup{(#2)}}}
\crefformat{ann}{#2#1#3}
\Crefformat{ann}{#2#1#3}
\Crefformat{section}{#2\S#1#3}
\Crefformat{subsection}{#2\S#1#3}
\Crefmultiformat{section}{#2\S#1#3}{ and~#2\S#1#3}{, #2\S#1#3}{ and~#2\S#1#3}
\newcommand{\annbadge}[2]{\text{{\color{ann#1}\scriptsize\textup{(#2)}}}}
\newcommand{\annbox}[3]{\tikz[baseline=(n.base)]{%
    \node[draw=ann#1,line width=.4pt,rounded corners=1.5pt,
          inner xsep=2.5pt,inner ysep=2pt](n){#3};}\,\annbadge{#1}{#2}}

\newcommand{\titleString}{Blame-aware Recomposition for Secure Compilation Proofs of Low-Level Compartment Isolation}

\title{\titleString}
\makeatletter
\AtBeginDocument{
  \hypersetup{
    pdftitle = {\titleString},
    pdfauthor = { Schlosser, Thibault, and Hrițcu}
  }
}
\makeatother
\author{
  Basile Schlosser\textsuperscript{1,3} \qquad 
  Jérémy Thibault\textsuperscript{2,3} \qquad %
  C\u{a}t\u{a}lin Hri\cb{t}cu\textsuperscript{3} \quad %
\\[0.5em]
{\small
  \textsuperscript{1}ENS Paris-Saclay, France\qquad
  \textsuperscript{2}EPFL, Lausanne, Switzerland\qquad
  \textsuperscript{3}MPI-SP, Bochum, Germany
}
}

\begin{document}

\maketitle

\begin{abstract}
  The security guarantees of existing formally verified compilers for
compartmentalized programs in unsafe languages like C stop at a
rather high level, assuming a built-in protection mechanism and a block-based
memory model. Verifying a concrete, low-level enforcement mechanism for
compartment isolation is still a major challenge for current proof techniques.
In particular, Abate et al.'s proof technique fails because one of their proof
steps, recomposition, essentially requires hiding all information
about the code of each compartment, which is not practically achievable at a low level.
We propose an extension of their proof technique to a low-level setting where
compartments can fully observe each other's code. To
do so, we introduce an overlay semantics that allows us to characterize these
observations as a form of undefined behaviour, and we use a novel variant of
recomposition that we call blame-aware recomposition, which essentially
allows recomposing executions as needed
in the original high-level proof technique, even if the low-level context observes the
code, because this context is ultimately going to be discarded.
We apply this new proof to a compartment isolation mechanism
that uses a programmable tag-based reference monitor.

\end{abstract}

\section{Introduction}

Vulnerabilities in software systems can lead to devastating attacks when
exploited by malicious attackers.
In particular, unsafe languages such as C call out many \emph{undefined
behaviours} (UB), which include various memory safety vulnerabilities that account
for 70\% of high-severity security bugs, according to both Microsoft and Chrome~\cite{memory-unsafety-chrome,memory-unsafety-microsoft-zdnet}.

Compartmentalization~\cite{YeeSDCMOONF10,ReisG09,sfi_sosp1993,Kilpatrick03,GudkaWACDLMNR15} is an often proposed practical countermeasure
against vulnerabilities that relies on the principle of least privilege:
programs are split into mutually distrustful compartments that interact
through protected interfaces.
If one compartment is compromised, the security impact on the rest
of the program is limited.
Compartmentalization is widely deployed in web browsers,
which use process-level isolation to protect web sites from each
other~\cite{ReisG09,ReisMO19}
and software-fault isolation (SFI)~\cite{sfi_sosp1993,YeeSDCMOONF10,Tan17}
to sandbox WebAssembly code and third-party
libraries~\cite{NarayanDGFRLSS20,KolosickNJWLGJS22}.

In order to characterize precisely the security guarantees a compilation chain for
compartmentalized source programs should satisfy in the presence of undefined
behaviour, Abate \etal~\cite{AbateABEFHLPST18,Abate_2018_arxiv} propose a criterion:
secure compilation should
restrict undefined behaviours both spatially (to just the compartments that
encounter undefined behaviour), and temporally (compartments are protected until
they encounter an undefined behaviour).
They reduce their criterion to a variant of a secure compilation criterion known
as \emph{Robust Safety Preservation} (RSP)~\cite{AbateBGHPT19,PatrignaniG21} that
they call \(\rscdcmd\), standing for \emph{Robustly Safe Compilation with
  Dynamic Compromise and Mutual Distrust}.
Their variant states that any attack mounted by low-level attackers (\ie,
low-level compartments that may have been dynamically compromised by undefined
behaviour) against a safety property of a compiled program can also be mounted by
source attackers (\ie, source-level compartments) against the source program.
In addition, they formally verify that a compiler for a simple compartmentalized
source language is secure according to this criterion.
Since then, this result has been extended to incorporate memory
sharing~\cite{secureptrs}, to prove stronger secure compilation
criteria~\cite{ThibaultLH25}, and most prominently to build
SECOMP~\cite{ThibaultBLAAGHT24}, a formally secure compiler based on CompCert for a
C variant with compartments.

However, a common limitation of these works is that they do not verify secure
compilation all the way down to the actual machine; instead, they stop at a
relatively high-level assembly language with an abstract view of memory and with
an abstract model of compartments built into the semantics.
Instead of a realistic flat memory model, where code is laid out in memory, they
assume a block-based memory model~\cite{LeroyB08} which only allows for a limited amount of memory
unsafety (for instance, they cannot model heap overflows), and have code in a
separate part of the state (so it cannot be modified or even read during
execution).
Hence, their abstract model of compartments is just a specification of what an
actual compartment isolation mechanism should do at a lower
level.

Abate \etal~\cite{AbateABEFHLPST18} implement this specification in two
lower-level compiler backends: one based on SFI~\cite{sfi_sosp1993,ReisG09,Tan17},
and another one based on a compartment isolation \emph{micro-policy}~\cite{micropolicies2015}, a programmable
tag-based reference monitor. %
They leave verifying these backends open, and instead systematically test
their SFI backend.
Similarly, SECOMP~\cite{ThibaultBLAAGHT24} proposes an unverified lower-level
backend targeting a variant of the CHERI RISC-V capability machine~\cite{CHERI-ISA}.

These works cannot verify such realistic backends because their proof
technique relies on a step called \emph{recomposition}~\cite{Abate_2018_arxiv}, which allows one to split two whole programs (obtained by linking partial programs together, denoted by a bowtie \(\bowtie\)) that produce the same event trace (calls, returns and outputs; written \(\semtrg m\)) into two parts each, and then recombine these parts into a new whole program that produces the same event trace.
In their proof, recomposition lets them replace a target-level
context $\trg{C_{\trglang}}$ with a compiled one $\cmp{\src{C_{\srclang}}}$:\footnote{We
  use \(\src{\textbf{bold, dark blue}}\) for source objects,
  and \(\trg{\text{thin, red}}\) for target objects.}
\[
  \forall\, m,
  \left.
  \begin{alignedat}{3}
    \trg{C_{\trglang}} \ & \linktrg \ & \cmp{\src{P}} & \semtrg m \\
    \cmp{\src{C_{\srclang}}} \ & \linktrg \ & \cmp{\src{P'}} & \semtrg m
  \end{alignedat}
  \right\}
  \implies
  \cmp{\src{C_{\srclang}}} \linktrg \cmp{\src{P}} \semtrg m
\]
This step relies on the event traces capturing all possible interactions
between compartments, meaning that the traces need to be
``fully abstract''~\cite{JeffreyR05,Laird07,PatrignaniC15}, and for this,
compartments cannot observe other compartments' \emph{code}.

This has little chance to hold at a low level, because there are many new observations compartments can make that are invisible (i.e. they don't appear on event traces) and hard to prevent.
A simple example would be reading code directly.
\begin{example}[Reading code directly]\label{ex:direct-read}
  Consider the following function \code{f} belonging to a compartment \code{C1}:
\begin{minted}[fontsize=\small]{c}
C1.f() {
  x := 100; // guesses an address
  if (*x < 1000) { // reads at it
    ...
  } else { ... }
}
\end{minted}
This function exploits the fact that there is no difference between a pointer and an integer in the low-level language to read memory at an arbitrary location, and branch on the result.
That is a problem, because after code is laid out in memory, address \code{100} may point to the code of another compartment, so the branch taken by \code{f} depends on the code of the compartments \code{C1} is linked with!
Hence, compartment \code{C1} learns information about other compartments' code, which is not recorded on the trace.
\end{example}

A second example shows observations through allocation.
\begin{example}[Observation through allocation]\label{ex:alloc-observation}~
\begin{minted}[fontsize=\small]{c}
C1.f() {
  x := malloc(1); // allocates one byte
  C2.g();
  y := malloc(1); // allocates another byte
  if (y - x > 1) { // compares the two addresses
    ...
  } else { ... }
}
\end{minted}
This function allocates a byte, calls a function from another compartment,
\code{C2.g}, and then allocates another byte.
By comparing the addresses of the two allocations, \code{C1} obtains information about the internal behaviour of another compartment, and can branch on it; this breaks recomposition, because allocating and comparing the two addresses are not events that are recorded on the event trace.
\end{example}

With a high-level memory model such as the block-based one of CompCert~\cite{LeroyB08}, these observations are not possible, even in CompCert's Asm: subtracting or ordering pointers to different blocks is undefined behaviour, and so is dereferencing an integer.
This is why previous recomposition proofs~\cite{Abate_2018_arxiv,ThibaultBLAAGHT24} did not run into them: executions stop with undefined behaviour before performing such observations.
Compiling further down to a flat memory model turns this undefined behaviour into defined low-level behaviour, so even a compartment compiled from source can perform such observations once it has been compromised by undefined behaviour, which is exactly the dynamic compromise scenario that \(\rscdcmd\) is designed to protect against.

We propose a significant extension of the recomposition-based proof technique of Abate \etal~\cite{Abate_2018_arxiv} that enables the verification of a secure compilation backend targeting a low-level assembly with a realistic memory model and a compartmentalization mechanism, which we call the low-level backend.
Our technique is based on two observations: first, that the partial programs kept in recomposition are both compiled, so they do not trigger undefined behaviours along the considered trace; and second, that one can characterize most behaviours that break recomposition at the lower level as undefined behaviours \wrt an overlay semantics~\cite{SkorstengaardDB19,GeorgesTB22}.

Specifically, we introduce an intermediate language that is all but equal to the low-level language targeted by the backend, except that it considers these behaviours undefined.
We write \(\ovsemtrg\) for the overlay semantics (whereas we use \(\semtrg\) for the low-level semantics).

We then replace the use of recomposition by a new variant we call \emph{blame-aware
recomposition}.
This variant allows recomposing partial programs just like before, with the
added precondition that three out of the four partial programs behave in a
defined way according to the overlay semantics, \ie, do not have the behaviours that
would break recomposition.
More specifically, blame-aware recomposition takes two executions:
\begin{align*}
  \text{if }\cmp{\src{C_S}} & \linktrg \cmp{\src{P'}} \ovsemtrg m &&\text{in the overlay semantics}\\
  \text{and either } &\trg{C_T} \linktrg \cmp{\src{P}} \ovsemtrg m &&\text{in the overlay semantics}\\
  \text{or } &\trg{C_T} \linktrg \cmp{\src{P}} \semtrg m &&\text{in the low-level semantics,}\\
  \textit{but} & &&\text{all undefined behaviours}\\
                 & &&\text{in this execution happen in } \trg{C_T}
\end{align*}
and yields an execution
of the recomposed program \(\cmp{\src{C_S}} \linktrg \cmp{\src{P}}\):
\begin{align*}
  \cmp{\src{C_S}} \linktrg \cmp{\src{P}} &\ovsemtrg m &&\text{in the overlay semantics.}
\end{align*}
The preconditions ensure that the observations that would break recomposition
can only occur in \(\trg{C_T}\), which is discarded, while
\(\cmp{\src{C_S}}\) cannot perform such an observation, at least along the
trace \(m\), because it executes in the overlay semantics.
This allows proving blame-aware recomposition for low-level
backends that don't explicitly prevent these observations.

Our blame-aware recomposition proof technique is strong enough to deal with observations that can be cleanly characterized in an overlay semantics, such as \Cref{ex:direct-read}.
The observations of \Cref{ex:alloc-observation} would be difficult to characterize as undefined behaviour in an overlay semantics without overly restricting acceptable behaviours (for instance, comparing pointers to the same block is allowed in CompCert-like memory models, and we would like to preserve this ability) or resorting to very complex tracking schemes.
Hence, instead of dealing with \Cref{ex:alloc-observation} using overlay semantics, we address it with a compiler-based masking scheme to hide compartments' heap usage: each compartment is statically assigned its own space in memory, ensuring that all of a compartment's memory allocations are contiguous independently of other compartments.
An advantage of this solution is that it prevents the allocation channel from leaking information not only about other compartments' code, but also about their data; the latter would be required to extend our guarantees beyond robust safety preservation in the future~\cite{AbateBGHPT19}.

We updated the formerly unverified micro-policy backend of Abate \etal~\cite{AbateABEFHLPST18} to close leaks of this kind, with an allocator that uses our masking scheme.
We prove that the resulting backend satisfies blame-aware recomposition using an overlay semantics.
Our overlay semantics is constructed as a variant of the micro-policy used at the lowest level: in addition to using the enforcement from the concrete backend, the overlay semantics calls reading and writing to code in memory and executing data undefined behaviours.

To prove blame-aware recomposition in this instance, we reuse the proof diagrams introduced to prove recomposition in SECOMP~\cite{ThibaultBLAAGHT24}.
While proving the main diagrams of blame-aware recomposition, we are, as expected, forced to reason about the concrete implementation of the compartment isolation mechanism, which is highly relevant for security.

In summary, we put forward the following contributions:
\begin{enumerate}
  \item We introduce \emph{blame-aware recomposition}, a novel variant of recomposition that relies on an intermediate \emph{overlay semantics} and gives rise to a proof technique that allows one to prove the security of a compiler targeting a low-level backend with a flat memory model.
  \item We generically prove that blame-aware recomposition can be used to replace recomposition and obtain \(\rscdcmd\). Our technique
  is formalized in the Rocq proof assistant.
  \item We apply our methodology to the micro-policy backend of Abate \etal~\cite{Abate_2018_arxiv}, modified as described above, proving blame-aware recomposition and as a consequence the security of this backend, in the Rocq proof assistant.
\end{enumerate}

\textbf{Outline:}
The rest of this paper is structured as follows: in \Cref{sec:new-approach}, we describe our new approach in a generic setting. In \Cref{sec:backend-change}, we discuss our adaptation of the micro-policy backend to make the proofs work. In \Cref{sec:recomp-proofs}, we discuss how we prove blame-aware recomposition for this backend. Finally, \Cref{sec:related,sec:future} discuss related and future work.

\textbf{Artifact:}
Our formal development is available at the following URL:
\url{https://github.com/secure-compilation/when-good-components-go-bad/tree/mp-transitional}

\section{General Machine-Checked Proof Technique Using Blame-Aware
  Recomposition}\label{sec:new-approach}

We start by presenting the proof technique of Abate \etal~\cite{Abate_2018_arxiv} in a generic setting, then explain why the same technique doesn't work fully in low-level settings, and finally describe a way to adapt this proof technique to such a setting.

\subsection{Setting and Background}
The generic setting we consider is as follows: we have an unsafe source language \(\srclang\) and a target language \(\trglang\).
Both languages have a notion of contexts (\(\src{C} \in \src{\mathbb{C}_{\mathcal{S}}}\) or \(\trg{C} \in \trg{\mathbb{C}_{\mathcal{T}}}\)) and partial programs (\(\src{P} \in \src{\mathbb{P}_{\mathcal{S}}}\) or \(\trg{P} \in \trg{\mathbb{P}_{\mathcal{T}}}\)) that can be linked together to obtain whole programs \(\src{W} = \src{C} \linksrc \src{P}\) or \(\trg{W} = \trg{C} \linktrg \trg{P}\).

Both languages are equipped with semantics that produce finite traces for whole programs: \(\src{W} \semsrc m\) and \(\trg{W} \semtrg m\).
The traces record stuckness with an event \(\bot_{\mathbb{P}}\) or \(\bot_{\mathbb{C}}\), depending on whether the stuckness occurred while executing in the partial program or the context.
Other than this, we do not require a particular structure for the traces, but a common choice is for them to record calls and returns between the partial program and the context.
We write \(\leq\) for the prefix relation between finite traces, and \(m' \prec_{\mathbb{P}} m\) (resp.\ \(m' \prec_{\mathbb{C}} m\)) to mean \(\exists\,m'' \leq m, m' = m'' \cdot \bot_{\mathbb{P}}\) (resp.\ \(\bot_{\mathbb{C}}\)): it captures \(m\) being a refinement of an undefined behaviour that appears at the end of \(m'\) while executing inside the partial program (resp.\ the context).

Additionally, our proof technique relies on small-step operational semantics at the target level to be able to define and to construct intermediate semantics that will be used in our replacement of recomposition.
Hence, we require the semantics of \(\trglang\), which we call \emph{low-level semantics}, to be described by a small-step relation between runtime states \(\trg{\xrightarrow{\com{e}}}\) where \(e\) can be either an event or nothing (for silent steps).

We write \(\trg{\xrightarrow{\com{m}}}^*\) to denote the transitive reflexive closure of that relation, producing the trace \(m\).
We suppose the initial state of a program \(\trg{W}\) is given by \(\operatorname{\trg{initial}}(\trg{W})\) and we see \(\trg{W} \semtrg m\) as a shorthand for \(\exists\, {\trg{s}}, \operatorname{\trg{initial}}(\trg{W}) \trg{\xrightarrow{\com{m}}}^* \trg{s}\).
Importantly, target-level runtime states should also be classified as belonging either to the program or the context, depending on which of the components is currently executing, written \(\trg{s} \in \trg{\mathbb{P}}\) or \(\trg{s} \in \trg{\mathbb{C}}\).

On the other hand, it doesn't matter how the semantics of \(\srclang\) is
defined, except for the fact that the back-translation and compiler correctness lemmas below have to be provable.

Our goal is to prove that a compiler \(\cmp[\srclang][\trglang]{}\) from \(\srclang\) to \(\trglang\) is secure, in the sense that it satisfies \(\rscdcmd\)~\cite{AbateABEFHLPST18}.
\begin{definition}[\(\rscdcmd\)]
  \begin{align*}
    &\forall\,\src{P}\in\src{\mathbb{P}_{\mathcal{S}}},\trg{C}\in\trg{\mathbb{C}_{\mathcal{T}}},m,\
     \trg{C} \linktrg \cmp[\srclang][\trglang]{\src{P}} \semtrg m \implies \\
    &\quad\exists\,\src{C}\in\src{\mathbb{C}_{\mathcal{S}}},m',\
     \src{C} \linksrc \src{P} \semsrc m' \wedge
      (m \leq m' \vee m' \prec_{\mathbb{P}} m).
  \end{align*}
\end{definition}
We refer the interested reader to~\cite{AbateABEFHLPST18} for a detailed explanation of what makes \(\rscdcmd\) interesting, but in short, \(\rscdcmd\) captures the fact that a low-level attack by a low-level context \(\trg{C}\), represented by a trace \(m\), can be explained in terms of a source context \(\src{C}\) that can perform an attack \(m'\) that is either longer than \(m\) (\(m \leq m'\)), or that can be blamed on (explained in terms of) an undefined behaviour triggered in the source program \(\src{P}\) (\(m' \prec_{\mathbb{P}} m\)).
The second disjunct rules out vacuous back-translations: the constructed context has defined behaviour, so any UB of \(\src{C} \linksrc \src{P}\) comes from \(\src{P}\).

To use the proof technique from Abate \etal, we need four components: \emph{compiler correctness}, a correct \emph{back-translation} from finite traces to source whole programs, the \emph{recomposition} theorem, and the \emph{blame} theorem.
Following the proof of Abate \etal, we start from the hypothesis \(\trg{C} \linktrg \cmp[\srclang][\trglang]{\src{P}} \semtrg m\) and apply the back-translation lemma:
\begin{proposition}[Back-translation]\label{hyp:back-translation}
  There exists \(\uparrow\), a function from finite traces to whole programs, such that:
  \begin{align*}
    &\forall\,\trg{P}\in\trg{\mathbb{P}_{\mathcal{T}}}, \trg{C}\in\trg{\mathbb{C}_{\mathcal{T}}},m,\
     \trg{C} \linktrg \trg{P} \semtrg m \implies\\
    &\quad\exists\,\src{C_S}\in\src{\mathbb{C}_{\mathcal{S}}},\src{P'}\in\src{\mathbb{P}_{\mathcal{S}}},\
     m\mathord{\uparrow} = (\src{C_S} \linksrc \src{P'}) \wedge \src{C_S} \linksrc \src{P'} \semsrc m.
  \end{align*}
\end{proposition}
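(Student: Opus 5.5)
The plan is to define the back-translation $\uparrow$ directly by induction on the finite trace $m$, following the construction of Abate \etal. Since $m$ records all cross-compartment interactions (calls, returns, outputs), we build, for each component appearing in $m$ (or in the interface of $\trg{C} \linktrg \trg{P}$), a source component that keeps a private counter of how many events it has seen. Each exported procedure, when entered, dispatches on this counter. It then emits exactly the next event of $m$ that the component is responsible for: a call to another component's procedure, a return, or an output. Finally it increments the counter. Splitting this whole source program along the same interface as $\trg{C}$ and $\trg{P}$ gives $\src{C_S}$ and $\src{P'}$ with $m\mathord{\uparrow} = \src{C_S} \linksrc \src{P'}$ by construction. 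Note that $\uparrow$ only depends on $m$ (and the interface), as the statement requires.

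The correctness part, $\src{C_S} \linksrc \src{P'} \semsrc m$, would be proved by a simulation invariant on prefixes. For every decomposition $m = m_1 \cdot e \cdot m_2$, the source execution of $m\mathord{\uparrow}$ reaches, after emitting $m_1$, a state with four properties:
\begin{enumerate}[label=(\roman*)]
  \item the currently executing component is the one responsible for $e$;
  \item each component's counter equals the number of its events in $m_1$;
  \item the call stack mirrors the pending calls in $m_1$;
  \item any data carried by events (e.g.\ call arguments or return values) that must be reproduced has been stored in component-local memory.
\end{enumerate}
The inductive step shows that from such a state the generated code for the current counter value takes silent steps and then emits $e$, restoring the invariant for $m_1\cdot e$.

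Here the hypothesis $\trg{C} \linktrg \trg{P} \semtrg m$ is used only to know that $m$ is well-formed. Calls and returns alternate in a stack-like discipline and respect the interfaces, so the generated calls are permitted by the source compartment policy. We would extract this as a small lemma stating that traces produced by the low-level semantics are well-bracketed and interface-respecting. If $m$ ends in $\bot_{\mathbb{P}}$ or $\bot_{\mathbb{C}}$, the back-translated component responsible for that point simply performs an undefined operation, so that the source trace also ends in the same blamed stuck event.

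The main obstacle we expect is the bookkeeping of the invariant across calls and returns. When a component returns into a caller, the caller must resume exactly in its dispatch code with the right counter. Return values must match the trace even though the source language may restrict which values are expressible. The first thing to try is to make every event's payload a literal in the generated code, so that no data flow needs to be simulated, and to handle returns by having each procedure body be a loop that re-dispatches on the counter after every outgoing call. This reduces the invariant to a purely syntactic correspondence between counter values and trace positions.
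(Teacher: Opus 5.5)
Your construction is essentially what the paper relies on. The paper does not prove this proposition itself: it is a hypothesis of the generic technique, discharged by reusing Abate et al.'s counter-based back-translation, which has the same ingredients as yours (a private counter per component, uniform dispatch code in every procedure, literal event payloads, well-formedness of target-produced traces, and, as in your version, the interface as an extra input). One detail to fix: the counter must be incremented \emph{before} the call or return is performed, as your invariant (ii) in fact requires. Nothing placed after a return ever runs in that activation, and a re-entrant call into the same component during an outgoing call would otherwise read a stale counter and replay the same event.
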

The back-translation allows one to take a (well-formed) finite trace \(m\) and
obtain a whole source program that produces this exact trace.
With this, we obtain a new \(\src{C_{\mathcal{S}}}\) and a new \(\src{P'}\) in the source.
The next step is to compile these back to the target, and to apply the forward direction of compiler correctness, which states that a compiled program produces all traces that were produced by the source program.
\begin{proposition}[Compiler Correctness (forward direction)]\label{hyp:fcc}
  \begin{align*}
    &\forall\src{P}\in\src{\mathbb{P}_{\mathcal{S}}}, \src{C}\in\src{\mathbb{C}_{\mathcal{S}}},m,
     \src{C} \linksrc \src{P} \semsrc m \implies
     \cmp[\srclang][\trglang]{\src{C}} \linktrg \cmp[\srclang][\trglang]{\src{P}} \semtrg m.
  \end{align*}
\end{proposition}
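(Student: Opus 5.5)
In the generic development this proposition is an assumption about the compilation chain. For a concrete instance such as the micro-policy backend, I would discharge it in three steps: a forward simulation for whole programs, a separate-compilation lemma that turns the compiled whole program into a linked one, and a simple induction over finite traces.

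\emph{Step 1: forward simulation for the whole program.} Fix \(\src{C}\) and \(\src{P}\) and write \(\src{W} = \src{C} \linksrc \src{P}\). I would define a matching relation \(\sim\) between source runtime states and target runtime states of the compiled whole program. It relates the control position to its compiled code address, the source memory to its layout in the flat target memory, and the call stack to the protected target stack. I would then establish three facts.
\begin{itemize}
\item Initial states match: the initial state of \(\src{W}\) is related to \(\operatorname{\trg{initial}}\) of the compiled program.
\item Steps are simulated: every source step \(\src{s} \xrightarrow{e} \src{s'}\) with \(\src{s} \sim \trg{t}\) is matched by target steps \(\trg{t} \trg{\xrightarrow{\com{e}}}^* \trg{t'}\) with \(\src{s'} \sim \trg{t'}\). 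The target steps are silent except for exactly the event \(e\), when there is one.
\item Matched target steps are never stuck: the target steps used in the simulation are not blocked by the reference monitor.
\end{itemize}
In practice the compiler is a pipeline of passes. I would prove one such simulation per pass and compose them, so that each matching relation stays local to its pass.

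\emph{Step 2: separate compilation.} Next I would show that compiling a linked whole program yields the same initial target state as linking the separately compiled pieces, \(\operatorname{\trg{initial}}(\cmp[\srclang][\trglang]{\src{C}} \linktrg \cmp[\srclang][\trglang]{\src{P}})\). This lets the result of Step 1 be stated on \(\cmp[\srclang][\trglang]{\src{C}} \linktrg \cmp[\srclang][\trglang]{\src{P}}\). It needs the compiler to be compositional and the target linker to agree with the layout assumed by the matching relation. With the masking scheme, each compartment's code, data and heap region is assigned statically, so layout should be deterministic per compartment.

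\emph{Step 3: finite traces.} Since \(\src{W} \semsrc m\) means reaching some state via \(m\), I would induct on the source step sequence and apply Step 1 at each step to obtain the target run producing \(m\). Because traces are finite prefixes, no measure against infinite stuttering is needed. Traces ending in \(\bot\) need care. The source trace records stuckness as an explicit event, so either compiled code must reproduce the same blame event, or the statement should be read for traces without a terminal \(\bot\); in that case the simulation only has to cover defined source steps. I expect to choose the latter reading, since that is how the proposition is used later.

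I expect the main obstacle to be the step-simulation case for cross-compartment calls and returns. Here the matching relation has to account for the concrete micro-policy tags on code, stack and heap. It must also show that the monitor permits exactly the well-formed interface calls, so the target neither gets stuck nor emits a different event. I would first try to prove this by strengthening \(\sim\) with an invariant saying that all tags in the target memory agree with the compartment ownership of the corresponding source locations.
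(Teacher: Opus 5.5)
The paper gives no proof of this proposition. It is one of the ingredients the generic technique takes as a hypothesis (the generic Rocq development only states the compiler-correctness lemmas), so your first sentence matches the paper. Your reading that the forward direction is meant for traces without a terminal \(\bot\) is also sensible.

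Your concrete discharge for the micro-policy backend has a genuine gap. Step~1 claims that \emph{every} source step is matched by target steps, but this cannot hold unconditionally here, for three reasons:
\begin{itemize}
\item The source language uses unbounded integers, while the backend uses 32-bit machine words. A source run that computes a value beyond \(2^{32}\) and passes it as a cross-compartment call argument, a return value, or to \texttt{E.print} produces a different event in the target.
\item The masking allocator gives each compartment only a fixed slice of the address space. A source allocation exceeding that slice succeeds in the source but cannot succeed in the target.
\item The total code must be small enough not to spill into the data region.
\end{itemize}
The paper says as much: its end-to-end theorem for this backend holds only ``under a few additional assumptions, mainly that the source executions do not go beyond the limits of the machine''. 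You need a hypothesis of this kind on the source execution, carried as part of your matching invariant, or a correspondingly weakened statement.

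Two smaller points. First, your Step~2 justification is wrong. The masking scheme partitions data, not code: the code of all compartments is laid out contiguously at the start of memory. Where a compartment's code lands therefore depends on the other compartments, which is exactly why \(\mathit{offset}(c)\) appears in the recomposition relations. Compositionality still goes through if your matching relation is indexed by the layout chosen when linking \(\cmp[\srclang][\trglang]{\src{C}} \linktrg \cmp[\srclang][\trglang]{\src{P}}\), just not for the reason you give.

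Second, the instance the paper's concrete argument actually consumes is compiler correctness into the overlay semantics: step (1) of \Cref{fig:proof-technique} lands there. So your ``never blocked by the monitor'' clause should be proved against the overlay micro-policy. Compiled code must never load from or store to code, and must never feed \(\mathtt{Ret\ n}\), \(\mathtt{InternalJump}\) or \(\mathtt{Inv}\)-tagged values to arithmetic or \(\mathtt{Bnz}\). It must also jump internally only through \(\mathtt{InternalJump}\)-tagged addresses. Since overlay steps are low-level steps, that stronger result implies the stated one.
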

After applying this theorem, we now have both \(\trg{C} \linktrg \cmp[\srclang][\trglang]{\src{P}} \semtrg m\) (from the hypothesis) and \(\cmp[\srclang][\trglang]{\src{C_{\srclang}}} \linktrg \cmp[\srclang][\trglang]{\src{P'}} \semtrg m\) (from compiler correctness).
This is where recomposition gets applied:
\begin{definition}[Recomposition]\label{def:recomposition}
  \[
    \forall m,
    \left.
      \begin{alignedat}{4}
        &\trg{C} & \linktrg\, & \trg{P} & \semtrg m\\
        &\trg{C'} & \linktrg\, & \trg{P'} & \semtrg m
      \end{alignedat}
    \right\}
    \implies
    \trg{C'} \linktrg \trg{P} \semtrg m.
  \]
\end{definition}
Recomposition takes executions of two target-level whole programs that produce the same trace, and requires proving that one can ``recompose'' these programs, taking the context of one and the partial program of the other and linking them, to obtain a \emph{recomposed execution} of the \emph{recomposed program} that produces the same trace.
Applying recomposition at this point yields \(\cmp[\srclang][\trglang]{\src{C_{\srclang}}} \linktrg \cmp[\srclang][\trglang]{\src{P}} \semtrg m\), which can be sent back to the source language using another compiler correctness theorem:
\begin{proposition}[Compiler Correctness (backward direction)]\label{hyp:bcc}
  \begin{align*}
    &\forall\,\src{P}\in\src{\mathbb{P}_{\srclang}}, \src{C}\in\src{\mathbb{C}_{\srclang}},m,\
     \cmp[\srclang][\trglang]{\src{C}} \linktrg \cmp[\srclang][\trglang]{\src{P}} \semtrg m \implies\\
    &\quad\exists\,m',\src{C} \linksrc \src{P} \semsrc m' \wedge
      (m \leq m' \vee m' \prec m)
  \end{align*}
  where \(m' \prec m\) means that \(m\) refines an undefined behaviour happening at the end of \(m'\), but with no indication of where the undefined behaviour happened.
\end{proposition}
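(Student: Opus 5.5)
The proof will go through a \emph{backward simulation} between the source whole program \(\src{C} \linksrc \src{P}\) and the target whole program \(\cmp[\srclang][\trglang]{\src{C}} \linktrg \cmp[\srclang][\trglang]{\src{P}}\). The backward simulation itself will be derived from a forward simulation, in the CompCert style. The statement concerns whole programs obtained by linking separately compiled parts. So the first step is a separate-compilation lemma: \(\cmp[\srclang][\trglang]{\src{C}} \linktrg \cmp[\srclang][\trglang]{\src{P}}\) coincides with, or is simulation-equivalent to, \(\cmp[\srclang][\trglang]{\src{C} \linksrc \src{P}}\). This reduces the statement to whole-program compiler correctness. The compiler is a composition of passes, so I will prove a simulation for each pass and compose them. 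Simulations with stuttering measures compose transitively.

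For each pass I will prove a forward simulation with a measure, relating source and target runtime states by a matching invariant. The invariant is preserved by every source step, possibly via zero or more target steps that emit the same events. Initial states are related, and a source step producing event \(e\) is matched by target steps producing \(e\). To flip this into a backward simulation, I need the target semantics to be \emph{determinate}: from a given state, two steps emit the same event or none. I also need the source to be \emph{receptive}. Given these, the standard flipping argument applies. Any target execution \(\operatorname{\trg{initial}}(\trg{W}) \trg{\xrightarrow{\com{m}}}^* \trg{s}\) is matched by a source execution that either produces some \(m'\) with \(m \leq m'\), or reaches a stuck state after producing a prefix \(m'' \leq m\). Stuckness in the source is exactly undefined behaviour. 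In the latter case I set \(m' = m'' \cdot \bot\), which gives \(m' \prec m\). Only \(\prec\), not \(\prec_{\mathbb{P}}\), is claimed, because the backward simulation does not track which component was executing when the source got stuck.

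The step I expect to be the main obstacle is the flipping argument at the lowest pass, where the target has a flat memory model. There the matching invariant must relate the block-based source memory to concrete addresses, including each compartment's statically reserved allocation region. It must also cover the parts of the target state that have no source counterpart: tags, the code laid out in memory, and the monitor's state. Determinacy of the target semantics must still hold when the reference monitor's checks are included. A failed check must count as a target stuck state that is never reached while the source is still defined. I would first establish that any target step taken from a state related to a non-stuck source state is matched by a source step. This uses progress of the forward simulation: if the source can step, the related target state can step, and by determinacy that target step is the matched one. The remaining routine cases are silent stutter steps, where the decreasing measure is needed, and the final truncation to finite prefixes.
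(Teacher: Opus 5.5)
The paper gives no proof of this proposition. It is one of the four ingredients (with back-translation, recomposition and blame) that the generic technique takes as \emph{hypotheses}, stated for an arbitrary source language, target language and compiler. The Rocq development likewise only \emph{defines} it, as an assumption an instance must supply. In that generality it is not provable: a compiler that ignores its input refutes it. Your argument is therefore a plan for discharging the hypothesis for one particular compiler, not a proof of the statement. It also imports structure the abstract setting does not provide: a pass decomposition, a separate-compilation lemma, determinacy of the target and receptiveness of the source. As a strategy for an instance, the CompCert-style route is the standard, sensible one: a forward simulation with a measure, flipped into a backward simulation, with source stuckness giving the \(m' \prec m\) disjunct.

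For the concrete instance you clearly have in mind (flat memory, tags, per-compartment regions), the step you rely on most fails as stated. You claim an unconditional forward simulation down to the micro-policy machine. But the source of Abate et al.\ has unbounded integers and unbounded block-based memory, while the backend computes on 32-bit words and gives each compartment a fixed-size region; the code must also fit below the data regions. A source arithmetic step whose result does not fit in 32 bits has no matching target step. After wrap-around the target can branch differently and emit events that no execution of \(\src{C} \linksrc \src{P}\) emits. The source is not stuck, so neither \(m \leq m'\) nor \(m' \prec m\) holds, and determinacy or receptiveness cannot repair this. Likewise, a source allocation that exceeds the compartment's region cannot be matched by the forward simulation.

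This is why the paper obtains its end-to-end result for this backend only ``under a few additional assumptions, mainly that the source executions do not go beyond the limits of the machine''. You need such a hypothesis. State it on \emph{all} executions of the source program, since in the backward direction you have no source execution in hand yet. Then use it in the forward simulation to keep every matched source step within bounds. Your discussion of the main obstacle omits this.
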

Using the last proposition, blame, we finally obtain \(\rscdcmd\):
\begin{proposition}[Blame]\label{hyp:blame}
  \begin{align*}
    &\forall\,\src{P},\src{P'}\in\src{\mathbb{P}_{\srclang}}, \src{C}\in\src{\mathbb{C}_{\srclang}},m,\\
    &\quad\src{C}\linksrc\src{P} \semsrc m \wedge \src{C}\linksrc\src{P'}\semsrc m' \wedge m' \prec m \implies m'\prec_{\mathbb{P}} m.
  \end{align*}
\end{proposition}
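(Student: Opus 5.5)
The plan is to argue by contradiction, comparing the two executions from the point of view of the shared context \(\src{C}\). From \(m' \prec m\) we get some \(m''\) with \(m'' \leq m\) and \(m' = m''\cdot\bot\), where \(\bot\) is either \(\bot_{\mathbb{P}}\) or \(\bot_{\mathbb{C}}\). If it is \(\bot_{\mathbb{P}}\), then \(m' \prec_{\mathbb{P}} m\) holds by definition and we are done. So the real case is \(m' = m''\cdot\bot_{\mathbb{C}}\), which we must show is impossible. In that case \(\src{C}\linksrc\src{P'}\) gets stuck inside \(\src{C}\) right after \(m''\), while \(\src{C}\linksrc\src{P}\) still produces \(m \geq m''\).

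The key tool is a source-level \emph{partial semantics} in the style of Abate \etal: it runs only the context \(\src{C}\) and replaces the program side by nondeterministic moves constrained by the trace. I would then prove two lemmas.

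\textbf{Decomposition.} Any whole-program execution of \(\src{C}\linksrc\src{Q}\) producing a trace \(t\) projects to a partial execution of \(\src{C}\) alone producing \(t\). Moreover, the context's part of the state depends only on \(\src{C}\) and \(t\).

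\textbf{Context determinism.} When the context is running, its state after a given trace prefix is determined up to silent steps. Hence the context is stuck after \(m''\) in one projection exactly when it is stuck in the other.

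Both lemmas rely on the trace recording every interaction between program and context, such as calls, returns and passed values. That is exactly the full-abstraction property that holds at the source level, even though it fails at the target level.

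With these lemmas, the projections of \(\src{C}\linksrc\src{P'}\) and \(\src{C}\linksrc\src{P}\) along \(m''\) yield the same context state. The first execution reaches a state in \(\mathbb{C}\) that is stuck. By determinism of the source semantics, the second execution must reach the same stuck context state after \(m''\). This contradicts the fact that it goes on to produce \(m\).

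The main obstacle is the boundary case \(m = m''\). Since \(\semsrc\) is prefix-closed on finite traces, \(\src{C}\linksrc\src{P}\semsrc m''\) does not by itself show that this execution takes a step after \(m''\). I would first try to resolve this with the convention that stuckness is recorded explicitly as \(\bot\). With that convention, \(m' \prec m\) is meaningful only when \(m\) genuinely extends \(m''\), or when \(\src{C}\linksrc\src{P}\) is known to take a further, possibly silent, step there. In the intended application, \(m\) comes from the target execution, and I would strengthen the hypothesis accordingly, requiring that \(m\) strictly extend \(m''\) or that \(\src{C}\) be the back-translated context, which is defined along all of \(m\).

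The remaining effort is the bookkeeping needed to prove that the context state is a function of the trace. This requires showing that the memory owned by \(\src{C}\) is modified only by \(\src{C}\)'s own steps and by the values transmitted in call and return events.
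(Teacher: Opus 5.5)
The paper gives no proof of this proposition. Like back-translation and compiler correctness, Blame is one of the assumed ingredients of the generic technique: the Rocq development takes it as a hypothesis, and for the concrete instance it is inherited from the source-level development of Abate \etal. So your argument has to stand on its own. Its core idea is right: the context's state is a function of its code and the trace, and its silent steps are deterministic, so a context stuck after \(m''\) with one program is stuck after \(m''\) with any program. Two things need fixing.

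First, the boundary case \(m''=m\) is not a technicality that a convention can resolve. It is a counterexample to the statement as written. Suppose \(\src{C}\) gets stuck in its own code immediately. Then \(\src{C}\linksrc\src{P}\semsrc\varepsilon\) by prefix-closure, and \(\src{C}\linksrc\src{P'}\semsrc\bot_{\mathbb{C}}\). So \(\bot_{\mathbb{C}}\prec\varepsilon\) holds but \(\bot_{\mathbb{C}}\prec_{\mathbb{P}}\varepsilon\) fails. Even without prefix-closure, taking \(\src{P'}=\src{P}\) and \(m=m'=\bot_{\mathbb{C}}\) gives a counterexample. Recording stuckness explicitly changes nothing, since \(\prec\) is a purely syntactic relation on traces. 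So drop your first fix and commit to a strengthened statement.

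The cleanest option is to assume \(m''\) is a strict prefix of \(m\). Then your argument works for every \(\src{C}\), and this is all the generic proof needs. When \(m''=m\), the trace \(m'=m\cdot\bot\) returned by backward compiler correctness already satisfies \(m\leq m'\), which is the first disjunct of \(\rscdcmd\).

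The alternative is to assume \(\src{C}\) is the back-translated context; this is what the paper's remark that ``the constructed context has defined behaviour'' alludes to. That needs more than being defined along \(m\). You need that it does not get stuck right after \(m\) (for instance, it halts cleanly), because that is where the offending stuck state sits.

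Second, your decomposition and context-determinism lemmas do not follow from anything in the paper's setting, where the source semantics is deliberately left abstract. This is exactly why the paper assumes Blame rather than proving it. The lemmas plausibly hold for a language like Abate \etal's, with private per-component memories, no memory sharing, and every value crossing the boundary recorded on the trace.

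You also rely on determinism of the context's silent steps. With a nondeterministic source semantics, one run of the context getting stuck would not force the other run to get stuck.

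Make explicit, too, that control is in \(\src{C}\) right after the last event of \(m''\) in both runs, and that the stuck state is reached from there by silent steps of \(\src{C}\) alone. This follows from every boundary crossing being an event. The ``bookkeeping'' you defer is therefore where essentially all of the language-specific work lies.
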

Blame allows us to identify the component that triggers the undefined behaviour, as required by \(\rscdcmd\).

When all these theorems hold, we obtain \(\rscdcmd\).
However, as we hinted at previously, recomposition may not always hold as stated here.
To explain how this may happen, and how we solve it, we must first explain how recomposition is proved.

\subsection{Recomposition and its Failings in Low-level Settings}\label{sec:new-approach:failings}

So, how do Abate \etal~\cite{Abate_2018_arxiv} prove recomposition?

\begin{figure}[ht]
  \begin{subfigure}{0.48\columnwidth}
\begin{minted}[fontsize=\small]{c}
C1.f() {
  if (C2.g(27) > 1) {
    E.print(1);
  } else {
    E.print(0);
  }
}

\end{minted}
\begin{minted}[fontsize=\small]{c}
C2.g(x) {
  E.print(x);
  return 12;
}
\end{minted}
    \caption{First program \texttt{W}}\label{fig:rec-ok-example-prog1}
  \end{subfigure}\hfill
  \begin{subfigure}{0.48\columnwidth}
\begin{minted}[fontsize=\small]{c}
C1.f() {
  r = C2.g(27);
  if (r > 1) {
    E.print(1);
  } else {
    E.print(0);
  }
}
\end{minted}
\begin{minted}[fontsize=\small]{c}
C2.g(x) {
  E.print(27);
  return 12;
}
\end{minted}
    \caption{Second program \code{W'}}\label{fig:rec-ok-example-prog2}
  \end{subfigure}
  \caption{Two programs that recompose well}\label{fig:rec-ok-example}
\end{figure}
Let us look at a concrete example where recomposition works well (\Cref{fig:rec-ok-example}).
This example is written in a hypothetical language whose syntax is similar to C, and which has a block-based memory model like the target of Abate \etal.
The programs \texttt{W} and \code{W'} are divided into two compartments, \code{C1} with one procedure \code{C1.f} (written \code{C1.f'} in \code{W'}) and \code{C2} with one procedure \code{C2.g} (written \code{C2.g'} in \code{W'}), and we have an \code{E.print} external function that outputs values on the trace.

Let us consider what happens in \texttt{W} in an execution starting in \code{C1.f}: first, \code{C1.f} calls \code{C2.g} with argument \code{27}; \code{C2.g} calls \code{E.print} to output it, then returns \code{12} to \code{C1.f}.
\code{C1.f} then branches on the returned value, and since \code{12 > 1}, takes the first branch and calls \code{E} to print it.

The second program \code{W'} does the same thing, with the sole exception being that \code{C1.f'} uses a temporary variable for storing the result of \code{C2.g'}, and that \code{C2.g'} hardcodes the value to be printed, \code{27}.
The event traces can be represented visually as the following sequences of transitions, where \(\rightarrow^{*}\) represents a sequence of silent steps, and the labels correspond to calls, returns, or outputs via \code{E.print}.
\tikzset{
  exec st/.style={circle,fill,inner sep=0pt,minimum size=2pt},
  exec tr/.style={->,shorten >=2pt,shorten <=2pt},
  exec lb/.style={above,inner sep=1.5pt,font=\ttfamily\scriptsize},
  exec star/.style={pos=1,anchor=south east,inner sep=0pt,xshift=1pt,yshift=0.5pt,font=\scriptsize}}%
\newcommand{\execrow}[4]{
  \node[left,font=\footnotesize] at (-0.2,#3) {\code{#2}};
  \foreach \i/\x in {0/0, 1/0.55, 6/5.3, 7/5.85, 8/6.5}
    \node[exec st] (p#1-\i) at (\x,#3) {};
  \foreach \i/\x in {2/2.0, 3/2.55, 4/3.3, 5/3.85}
    \node[exec st] (p#1-\i) at (\x,#4) {};
  \foreach \a/\b in {0/1, 2/3, 4/5, 6/7}
    \draw[exec tr] (p#1-\a) -- node[exec star] {\(*\)} (p#1-\b);
  \draw[exec tr] (p#1-1) -- node[exec lb,sloped] {C2.g(27)} (p#1-2);
  \draw[exec tr] (p#1-3) -- node[exec lb] {27} (p#1-4);
  \draw[exec tr] (p#1-5) -- node[exec lb,sloped] {Ret 12} (p#1-6);
  \draw[exec tr] (p#1-7) -- node[exec lb] {1} (p#1-8);
}%
\[
  \begin{tikzpicture}
    \execrow{1}{W}{0}{0}
    \execrow{2}{W'}{-0.9}{-0.9}
  \end{tikzpicture}
\]

Now, recomposition states that the two programs can be recomposed: one can build a program \texttt{W½} by taking \code{C1.f} from \texttt{W} and \code{C2.g'} from \code{W'}.
If we follow the execution of that program, we see that it produces the exact same trace, despite having different code from \texttt{W} and \code{W'}.

The reason is that when executing inside \code{C1.f}, \texttt{W½} takes the exact same steps as \texttt{W}, and when executing inside \code{C2.g'}, the exact same steps as \code{W'}, which can be represented as the diagram below:
\[
  \begin{tikzpicture}
    \execrow{1}{W}{0}{0}
    \execrow{h}{W½}{-0.55}{-1.45}
    \execrow{2}{W'}{-2}{-2}
  \end{tikzpicture}
\]

In fact, this diagram is exactly how the proof of recomposition goes for both Abate \etal~\cite{Abate_2018_arxiv} and Thibault \etal~\cite{ThibaultBLAAGHT24}: they prove a \emph{three-way simulation} between the two original programs and the recomposed one.
More precisely, they prove a series of simulation diagrams capturing the idea that silent steps in the parts that are kept are matched by equivalent silent steps (steps closer together between \code{W}/\code{W'} and \code{W½} in the diagram), that silent steps in the parts that are discarded do not matter (steps further apart in the diagram), and that event-producing steps are taken synchronously.

\begin{figure}[ht]
  \begin{subfigure}[b]{0.48\columnwidth}
\begin{minted}[fontsize=\small,escapeinside=||]{c}
C1.f() {
  if (C2.g(27) > 1) {
    E.print(1);
  } else {
|\llap{100:\ }|    E.print(0);
  }
}
\end{minted}
\begin{minted}[fontsize=\small]{c}

C2.g(x) {
  E.print(*50);
  if (*100 < 1000) {
    return 0;
  } else {
    return 12;
  }
}
\end{minted}
    \caption{First program \texttt{W}}\label{fig:rec-notok-example-prog1}
  \end{subfigure}\hfill
  \begin{subfigure}[b]{0.48\columnwidth}
\begin{minted}[fontsize=\small,escapeinside=||]{c}
C1.f() {
  r = C2.g(27);
  if (r > 1) {
|\llap{100:\ }|    E.print(1);
  } else {
    E.print(0);
  }
}
\end{minted}
\begin{minted}[fontsize=\small]{c}
C2.g(x) {
  E.print(27);
  if (*100 < 1000) {
    return 12;
  } else {
    return 0;
  }
}
\end{minted}
    \caption{Second program \code{W'}}\label{fig:rec-notok-example-prog2}
  \end{subfigure}
  \caption{Two programs that don't recompose well}\label{fig:rec-notok-example}
\end{figure}
Now, let us look at what could go wrong when we introduce lower-level observations.
Consider the programs in \Cref{fig:rec-notok-example}.
The functions \code{C1.f} and \code{C1.f'} are unchanged, but \code{C2.g} and \code{C2.g'} have changed.
In \texttt{W}, \code{C2.g} first reads the value to print at an arbitrary address (\code{50}), then reads at another arbitrary address (\code{100}), and chooses which value to return based on what it finds there.
In \code{W'}, \code{C2.g'} only does this second read (and always prints \code{27}), but the branches are swapped.

In the language from the example, reading memory at an arbitrary integer address is undefined behaviour, which is modelled by the execution stopping there, as depicted in the following diagram, in black.
\NewDocumentCommand{\ubexecrow}{s O{black} m m m m}{
  \node[left,font=\footnotesize] at (-0.2,#5) {\code{#4}};
  \foreach \i/\x in {0/0, 1/0.55}
    \node[exec st] (q#3-\i) at (\x,#5) {};
  \foreach \i/\x in {2/2.0, 3/2.55}
    \node[exec st] (q#3-\i) at (\x,#6) {};
  \foreach \a/\b in {0/1, 2/3}
    \draw[exec tr] (q#3-\a) -- node[exec star] {\(*\)} (q#3-\b);
  \draw[exec tr] (q#3-1) -- node[exec lb,sloped] {C2.g(27)} (q#3-2);
  \IfBooleanF{#1}{
    \foreach \i/\x in {4/3.3, 5/3.85}
      \node[exec st,#2] (q#3-\i) at (\x,#6) {};
    \draw[exec tr,#2] (q#3-3) -- node[exec lb] {27} (q#3-4);
    \draw[exec tr,#2] (q#3-4) -- node[exec star] {\(*\)} (q#3-5);
  }
}%
\newcommand{\lowext}[6][BrickRed]{
  \begin{scope}[#1]
    \node[exec st] (q#2-6) at (4.4,#4) {};
    \node[exec st] (q#2-7) at (5.85,#3) {};
    \node[exec st] (q#2-8) at (6.4,#3) {};
    \node[exec st] (q#2-9) at (7.05,#3) {};
    \draw[exec tr] (q#2-5) -- node[exec star] {\(*\)} (q#2-6);
    \draw[exec tr] (q#2-6) -- node[exec lb,sloped] {#5} (q#2-7);
    \draw[exec tr] (q#2-7) -- node[exec star] {\(*\)} (q#2-8);
    \draw[exec tr] (q#2-8) -- node[exec lb] {#6} (q#2-9);
  \end{scope}
}%
\[
  \begin{tikzpicture}
    \ubexecrow[BrickRed]{1}{W}{0}{0}
    \ubexecrow{h}{W½}{-0.55}{-1.45}
    \ubexecrow{2}{W'}{-2}{-2}
    \node[below,inner sep=3pt,font=\scriptsize] at (q1-3) {\(\bot\)};
    \foreach \r in {h,2}
      \node[below,inner sep=3pt,font=\scriptsize] at (q\r-5) {\(\bot\)};
    \lowext{1}{0}{0}{Ret 12}{1}
    \lowext{h}{-0.55}{-1.45}{Ret 0}{0}
    \lowext{2}{-2}{-2}{Ret 12}{1}
  \end{tikzpicture}
\]
Since the executions of the two programs end early, recomposition only requires recomposing \texttt{W} and \code{W'} up to the last shared event (the call to \code{C2.g}), and indeed \texttt{W½} proceeds similarly up to that point.

However, after compiling further down to a flat memory model, the programs are in a lower-level language where code is laid out in memory and any integer can be used as a pointer, thus reading at an arbitrary address isn't undefined behaviour anymore.\footnote{We do not give out the code of the compiled programs explicitly, and for the purpose of this example, one can imagine the same syntax to be executed using a lower-level semantics.}
To illustrate the problems encountered when trying to prove recomposition at this lower level, let us assume that the compiler put the code of the compartments first in memory, and in particular that address \code{100} points to a particular instruction from compartment \code{C1}: in \code{W}, to the one responsible for calling \code{E.print(0)}, and in \code{W'}, to the one responsible for calling \code{E.print(1)}; and let us assume that address \code{50} in \code{W} contains the integer \code{27}.

The execution of \code{W} starts the same as in the source, but instead of encountering undefined behaviour, the reads at arbitrary addresses succeed and the execution can continue, represented in red.
\code{C2.g} first reads \code{27}, prints it, then reads from address \code{100}, obtains something that doesn't satisfy the condition, and ends up returning \code{12}.

Now, \code{W'} behaves similarly: it prints \code{27}, as before, then reads from address \code{100}, obtaining this time something that satisfies the condition and returns \code{12}.
Hence, \code{W} and \code{W'} end up having the same event trace.
Yet, if we try to recompose them, we run into an issue: \code{C2.g'} reads from address \code{100}, gets a value that doesn't pass the test, and so returns \code{0}.
Hence, \texttt{W½}'s execution extends the previous one with what's depicted in red, which is clearly not the same trace: the simulation falls apart; recomposition cannot hold.

This doesn't mean \(\rscdcmd\) doesn't hold anymore; indeed, the observations that break recomposition are observations about the code of other compartments, and, as explained by Abate \etal~\cite{AbateBGHPT19}, criteria such as \(\rsp\) and its derivatives (including \(\rscdcmd\)) do not depend on the privacy of code.
Moreover, in the proof of \(\rscdcmd\), the kept function \code{C2.g'} can only come from two places: it is either part of the compiled partial program or part of the compiled back-translation.
In the former case, that means the partial program exhibited undefined behaviour, and so we can use the blame conjunct of \(\rscdcmd\) (\(m' \prec_{\mathbb{P}} m\)) to stop early in reproducing the trace; in the latter case, \code{C2.g'} comes from a source program that the back-translation built to produce the whole trace \(m\); since reading address \code{100} is undefined behaviour in the source, that read can only happen after \(m\) ends, so it cannot affect the recomposed execution along \(m\).
Similarly, we're discarding \code{C2.g}, so what it actually does shouldn't matter at all.

This gives us a glimpse of the solution: our goal becomes to build a variant of recomposition that can be proved, because it excludes these cases, and can still be used to obtain \(\rscdcmd\).

\subsection{Overlay Semantics}\label{sec:new-approach:overlay}

Our methodology starts by introducing \emph{overlay semantics}~\cite{SkorstengaardDB19,GeorgesTB22} that prevent these undesirable behaviours: a second small-step semantics for \(\trglang\) programs, using the exact same syntax for programs, but with runtime states carrying additional information, called \emph{ghost state}.
The difference is that the small-step relation for the overlay semantics, \(\trg{\xrightarrow{\com{e}}_{\mathrm{ov}}}\), is made more restrictive by adding conditions that prevent these behaviours, using the ghost state.
We can then define \(\trg{\xrightarrow{\com{t}}}_{\trg{\mathrm{ov}}}^*\) and \({\semtrg}_{\trg{\mathrm{ov}}}\) similarly to those for the low-level semantics.

Because the overlay semantics is defined to be more restrictive than the low-level semantics, we trivially have:
\[
  \forall\,\trg{s},\trg{s'}, \trg{s} \mathrel{\trg{\xrightarrow{\com{e}}_{\mathrm{ov}}}} \trg{s'} \implies \trg{s} \mathrel{\trg{\xrightarrow{\com{e}}}} \trg{s'}
\]
and, as a consequence:
\begin{align*}
  \forall\,\trg{s},\trg{s'}, \trg{s} \mathrel{\trg{\xrightarrow{\com{e}}}_{\trg{\mathrm{ov}}}^*} \trg{s'} &\implies \trg{s} \mathrel{\trg{\xrightarrow{\com{e}}}^*} \trg{s'} \, \mathrm{ and}\\
  \forall\,\trg{W}, m, \trg{W} \mathrel{{\semtrg}_{\trg{\mathrm{ov}}}} m &\implies \trg{W} \semtrg m \text{ for } m \text{ free of UB}
\end{align*}

In the other direction, there are cases where \(\trg{s} \mathrel{\trg{\xrightarrow{\com{e}}}} \trg{s'}\) holds but \(\trg{s} \mathrel{\trg{\xrightarrow{\com{e}}_{\mathrm{ov}}}} \trg{s'}\) doesn't hold (for any \(\trg{s'}\)).
We can illustrate that fact by thinking of our previous example.
In the low-level semantics, the states \(\trg{s}\) contain a memory \(\trg{m}\), \ie a map from \(\trg{\mathtt{word}}\) to \(\trg{\mathtt{word}}\), and loading from memory is essentially unconstrained:
\begin{mathpar}
  \inferrule[Low-Level-Load]{
    \trg{m(w) = w'} \\
    \dots
  }{
    \trg{\mathop{Load}(m, w) = w'}
  }
\end{mathpar}
One can load from a location whenever it contains something, no matter what.\footnote{Other conditions, such as not being allowed to read from another compartment's data, are omitted, since they are not relevant to the discussion here, which is about distinguishing code from data in memory.}

The overlay semantics adds a bit of ghost state, here a rudimentary tag on every memory location recording whether it contains code, and uses it to restrict execution:
\begin{mathpar}
  \inferrule[Overlay-Load]{
    \trg{m(w) = w'}\\
    \highlight{\trg{m \vdash w : data}} \\
    \dots
  }{
    \trg{\mathop{Load}(m, w) = w'}
  }
\end{mathpar}
Here, the notation ``\(\trg{m \vdash w : data}\)'' means that location \(\trg{w}\) contains data, to be contrasted to the notation ``\(\trg{m \vdash w : code}\)'' that would appear on locations containing code.
According to this rule, the reads performed by \code{C2.g} and \code{C2.g'} wouldn't be allowed, because they target the code area of memory.

In summary, our methodology starts by defining an overlay semantics that gets stuck when encountering the behaviours that break recomposition, essentially seeing the low-level semantics as \emph{refining undefined behaviour} from the overlay semantics.
Defining such overlay semantics is not always easy, and one can easily imagine settings where doing so would be prohibitively complex, or even impossible.
As we will see in \Cref{sec:backend-change}, which describes our concrete implementation of the proof technique, we propose combining our methodology with concrete enforcement at the lowest level, to eliminate some behaviours and stay in a setting where defining the overlay semantics is realistically feasible. 

Now that we have defined the overlay semantics, we can reduce the problem to proving two instances of \(\rscdcmd\) and composing them: (1) for the compiler from \(\srclang\) to \(\trglang\) with the overlay semantics, and (2) for the identity compiler from \(\trglang\) with the overlay semantics to \(\trglang\) with the low-level semantics.

Because we eliminated all behaviours that invalidate recomposition in the overlay semantics, we can apply the proof technique of Abate \etal to prove (1), and this doesn't require much additional work: by definition, we can still reuse the back-translation (traces in the low-level semantics can only become smaller in the overlay semantics) and compiler correctness (the overlay semantics only stops behaviours that were anyway not allowed in the source language).
We will also see that the recomposition variant used to prove (2) also implies recomposition for the overlay semantics, so we need not duplicate this proof.
We now focus on proving (2).

\subsection{Undefined-behaviour-restricting semantics}

In order to prove \(\rscdcmd\) between the two semantics for \(\trglang\), we introduce a novel variant of recomposition; but before we can do so, we need to build a third semantics that captures the high-level intuition of ``undefined behaviour not happening in a certain program part''.
The reason we want to do this is that our proofs are not manipulating arbitrary low-level objects: one component is indeed a low-level component, but the other one is obtained by compiling a source program, and so it has defined behaviour according to the overlay semantics, at least along the trace that is considered.
This semantics should then allow us to reason about the linking of these two objects.

We call the semantics we define \emph{undefined-behaviour-restricting semantics} (UB-restricting semantics for short), because it allows some behaviours from the low-level semantics that are not allowed in the overlay semantics, but not all.
We build this by combining the overlay semantics and the low-level semantics, at the small-step level.

The semantics is parameterized by \(\trg{\mathbb{I}}\), which can be either \(\trg{\mathbb{C}}\) or \(\trg{\mathbb{P}}\), describing on which side undefined behaviours are allowed, and defined by the small-step relation \(\trg{\rightarrow_{\UB(\mathbb{I})}}\) with the following two rules:
\begin{mathpar}
  \inferrule[S-In-I]{
    \trg{s} \in \trg{\mathbb{I}} \\
    \trg{s} \mathrel{\trg{\xrightarrow{\com{e}}}} \trg{s'}
  }{
    \trg{s} \mathrel{\trg{\xrightarrow{\com{e}}_{\UB(\mathbb{I})}}} \trg{s'}
  }
  \and
  \inferrule[S-NotIn-I]{
    \trg{s} \notin \trg{\mathbb{I}} \\
    \trg{s} \mathrel{\trg{\xrightarrow{\com{e}}_{\mathrm{ov}}}} \trg{s'}
  }{
    \trg{s} \mathrel{\trg{\xrightarrow{\com{e}}_{\UB(\mathbb{I})}}} \trg{s'}
  }
\end{mathpar}
Rule \textsc{S-In-I} says that any low-level behaviour is allowed in the UB-restricting semantics, for runtime states belonging to \(\mathbb{I}\).
The other rule, \textsc{S-NotIn-I}, says that only behaviours that are possible in the overlay semantics are allowed in the UB-restricting semantics, for runtime states not belonging to \(\mathbb{I}\).

Since \(\trg{s} \mathrel{\trg{\xrightarrow{\com{e}}_{\mathrm{ov}}}} \trg{s'}\) implies \(\trg{s} \mathrel{\trg{\xrightarrow{\com{e}}}} \trg{s'}\), the UB-restricting semantics accepts any behaviour that can be expressed in the overlay semantics; but, additionally, it accepts low-level behaviours that are not possible in the overlay semantics (\ie undefined behaviours \wrt the overlay semantics) as long as they happen in the right component.

Coming back to the example from \Cref{fig:rec-notok-example}, we can execute the first program \texttt{W} with the UB-restricting semantics, in two ways depending on which compartment we assign to \(\mathbb{I}\):
\[
  \begin{tikzpicture}
    \ubexecrow*{a}{\(\mathbb{I} \ni \mathtt{C1}\)}{0}{0}
    \node[below,inner sep=3pt,font=\scriptsize] at (qa-3) {\(\bot\)};
    \ubexecrow[BrickRed]{b}{\(\mathbb{I} \ni \mathtt{C2}\)}{-0.9}{-0.9}
    \lowext{b}{-0.9}{-0.9}{Ret 12}{1}
  \end{tikzpicture}
\]
In the first execution, \code{C2} doesn't belong to \(\mathbb{I}\), so it must follow the overlay semantics, and triggers undefined behaviour when attempting to read code.
On the other hand, when \code{C2} belongs to \(\mathbb{I}\), in the second execution, these steps are allowed because the UB-restricting semantics does not constrain steps taken by compartments in \(\mathbb{I}\).

\subsection{Blame-aware recomposition}
Using these definitions, we can now formulate a novel variant of recomposition, which we call \emph{blame-aware recomposition}, that captures the ideas necessary to make the proof work in our setting.
As we mentioned previously, there is no hope of proving recomposition (\Cref{def:recomposition}).
However, we repeat the two observations that guide the definition of our new property:
\begin{itemize}
  \item \(\trg{C}\) (and \(\trg{P'}\)) are not used in the conclusion of recomposition, so in principle they may exhibit the recomposition-breaking behaviours of \Cref{sec:new-approach:failings}.
  \item Recomposition is used in a proof technique where \(\trg{C'}\) and \(\trg{P'}\) are the compiled back-translation of \(m\), so they do not exhibit the unwanted behaviours along \(m\); and where \(\trg{P}\) is compiled from a source program, so it can only exhibit them where the source program already has undefined behaviour, which the blame conjunct of \(\rscdcmd\) allows.
\end{itemize}

These observations lead us to the definition of blame-aware recomposition,
trying to capture these intuitions by using the UB-restricting semantics:
\begin{definition}[Blame-aware recomposition]\label{def:ba-recomp}
\[
  \forall m,
  \left.
    \begin{alignedat}{5}
      &\trg{C} & \linktrg\, & \trg{P} && \mathrel{\semtrg}_{\trg{\UB(\mathbb{C})}} & m\\
      &\trg{C'} & \linktrg\, & \trg{P'} && \ovsemtrg & m
    \end{alignedat}
\right\}
\implies
\trg{C'} \linktrg \trg{P} \ovsemtrg m
\]
\end{definition}

This property is similar to the recomposition property from Abate \etal~\cite{Abate_2018_arxiv}, in that it allows putting together two similar executions from different programs to obtain a third execution.
However, the difference with the previous version of recomposition lies in the semantics that are being used: \(\trg{C'} \linktrg \trg{P'}\) must execute according to the overlay semantics, while \(\trg{C} \linktrg \trg{P}\) must execute according to the UB-restricting semantics.
In the conclusion, we obtain an execution in the overlay semantics, which can trivially be lowered to the low-level semantics.
The use of the UB-restricting semantics for \(\trg{C} \linktrg \trg{P}\) means that \(\trg{C}\) is allowed to have behaviours that are not allowed in the overlay semantics, \ie it is allowed to perform the low-level behaviours that would invalidate standard recomposition.

To illustrate this definition, let us take the example from \Cref{fig:rec-notok-example} and instantiate the components as follows:
\[
  \begin{alignedat}{10}
    \mathtt{C1} &\in \mathbb{P} \qquad && \mathtt{W} &=& \, \trg{C} & \linktrg \, & \trg{P} &=& \, \mathtt{C2} \, & \linktrg \, & \mathtt{C1} \\
    \mathtt{C2} &\in \mathbb{C} \qquad && \text{\code{W'}} &=& \, \trg{C'} & \linktrg \, & \trg{P'} &=& \, \mathtt{C2'} \, & \linktrg \, & \mathtt{C1'}
  \end{alignedat}
\]
The recomposition diagram for this instantiation is:
\[
  \begin{tikzpicture}
    \ubexecrow[BrickRed]{1}{W}{0}{0}
    \ubexecrow{h}{W½}{-0.55}{-1.45}
    \ubexecrow{2}{W'}{-2}{-2}
    \foreach \r in {h,2}
      \node[below,inner sep=3pt,font=\scriptsize] at (q\r-5) {\(\bot\)};
    \lowext{1}{0}{0}{Ret 12}{1}
  \end{tikzpicture}
\]
Since \(\trg{C} \linktrg \trg{P}\) executes according to the UB-restricting semantics (\(\semtrg_{\trg{\UB(\mathbb{C})}}\)), the low-level observations performed by \code{C2.g} go through, depicted in red.
However, those are not allowed in \(\trg{C'} \linktrg \trg{P'}\), so the execution still gets stuck in \code{C2.g'}, when reading at address \code{100}.
But now, note how we recompose: despite \code{W} technically triggering undefined behaviour before printing \code{27}, we are still able to include this event in the recomposed execution, because inside \code{W}, it happens in the context that will be discarded!

We also introduce an alternative formulation of blame-aware recomposition that takes into account what happens when the execution gets stuck while executing inside the program part \(\mathbb{P}\) in the UB-restricting semantics: the recomposed execution also gets stuck while executing inside the program part.
\begin{definition}[Blame-aware recomposition with explicit undefined behaviour]\label{def:ba-recomp-ub}
\[
  \forall m,
  \left.
    \begin{alignedat}{4}
      &\trg{C} &\linktrg\, & \trg{P} \mathrel{\semtrg}_{\trg{\UB(\mathbb{C})}} & &m \cdot \bot_{\mathbb{P}}\\
      &\trg{C'} &\linktrg\, & \trg{P'} \mathrel{\semtrg}_{\trg{\mathrm{ov}}} & &m 
    \end{alignedat}
\right\}
\implies
\trg{C'} \linktrg \trg{P} \mathrel{\semtrg}_{\trg{\mathrm{ov}}} m \cdot \bot_{\mathbb{P}}
\]
\end{definition}

The purpose of this definition is to properly handle the undefined behaviour happening in the program, and to link it directly to the blame conjunct of \(\rscdcmd\).
To illustrate it, let us take the other assignment (\code{C1} becomes the context and \code{C2} becomes the program), and the other recomposition (we build \code{W½'} with \code{C1} from \code{W'} and \code{C2} from \code{W}).
\[
  \begin{alignedat}{10}
    \mathtt{C1} &\in \mathbb{C} \qquad && \mathtt{W} &=& \, \trg{C} & \linktrg \, & \trg{P} &=& \, \mathtt{C1} \, & \linktrg \, & \mathtt{C2} \\
    \mathtt{C2} &\in \mathbb{P} \qquad && \text{\code{W'}} &=& \, \trg{C'} & \linktrg \, & \trg{P'} &=& \, \mathtt{C1'} \, & \linktrg \, & \mathtt{C2'}
  \end{alignedat}
\]
  \begin{figure*}
    \begin{tikzpicture}[>=Stealth,x=3cm,y=1.5cm]

      \node (A) at (0, 0) {$\trg{C} \linktrg \trg{P} \semtrg m$};
      \node (B) at (0, 1) {};
      \node (C) at (0, 2) {$\trg{C'} \linktrg \trg{P'} \mathrel{{\semtrg}_{\trg{\mathrm{ov}}}} m$};

      \node (D) at (1.25, 1) {$\trg{C} \linktrg \trg{P} \mathrel{{\semtrg}_{\trg{\UB(\mathbb{C})}}} m$};
      \node (E) at (1.25, 2) {$\trg{C'} \linktrg \trg{P} \mathrel{{\semtrg}_{\trg{\mathrm{ov}}}} m$};

      \node (F) at (3, 1) {$\trg{C} \linktrg \trg{P} \mathrel{{\semtrg}_{\trg{\UB(\mathbb{C})}}} m' \cdot \bot_{\mathbb{P}}$ and $m' \le m$ };
      \node (G) at (3, 2) {$\trg{C'} \linktrg \trg{P} \mathrel{{\semtrg}_{\trg{\mathrm{ov}}}} m' \cdot \bot_{\mathbb{P}}$};

      \node (O) at (1.90, 1) {\Large$\vee$};

      \node[draw=none,align=center,text height=0pt,text depth=0pt] (LabA) [left=2cm of A] {\textcolor{gray}{Low-level}\\ \textcolor{gray}{semantics}};
      \node[draw=none,align=center,text height=0pt,text depth=0pt] (LabB) [above=of LabA] {\textcolor{gray}{UB-restricting}\\ \textcolor{gray}{semantics}};
      \node[draw=none,align=center,text height=0pt,text depth=0pt] (LabC) [above=of LabB] {\textcolor{gray}{Overlay}\\ \textcolor{gray}{semantics}};

      \draw[->] (A) -- node[left] {\footnotesize(1) Back-translation} (C);
      \draw[->, bend right] (A) to node[below=0.3cm] {\footnotesize(2) Maximum prefix lemma} (O);

      \draw[->] (D) -- (E);
      \draw[->] (C) -- (E);

      \draw[->] (F) -- (G);

      \node[align=right] (LabelRec1) at (0.80, 1.5) {\footnotesize (3) Blame-Aware\\ \footnotesize Recomposition};
      \node[align=right] (LabelRec2) at (2.35,1.5) {\footnotesize (4) Blame-Aware Recomposition \\ \footnotesize with Undefined Behaviour};

      \draw[->, bend left=25] (C) to (G);

    \end{tikzpicture}
    \caption{Proof technique for \(\rscdcmd\) between the overlay semantics and the low-level semantics}\label{fig:proof-technique}
  \end{figure*}
For this assignment, the recomposition diagram is:
\[
  \begin{tikzpicture}
    \ubexecrow*{1}{W}{0}{0}
    \ubexecrow*{h}{W½'}{-1.45}{-0.55}
    \ubexecrow{2}{W'}{-2}{-2}
    \foreach \r in {1,h}
      \node[below,inner sep=3pt,font=\scriptsize] at (q\r-3) {\(\bot\)};
    \node[below,inner sep=3pt,font=\scriptsize] at (q2-5) {\(\bot\)};
  \end{tikzpicture}
\]
Here, the undefined behaviour that happens in the first execution happens in the program part, hence it must also happen in the program part in \code{W½'}.
As we will see in \Cref{sec:new-approach:replay}, we need this second formulation for programs \(\trg{P}\) with undefined behaviour.

Importantly, blame-aware recomposition directly implies recomposition at the level of the overlay semantics, which simplifies proving that the compiler to the overlay semantics satisfies \(\rscdcmd\), \ie instance (1) of \Cref{sec:new-approach:overlay}.

One may wonder why we chose to use the overlay semantics, and not the UB-restricting semantics, in the second execution, \(\trg{C'} \linktrg \trg{P'} \mathrel{\semtrg}_{\trg{\mathrm{ov}}} m\).
Indeed, it appears to be possible to replace it with \({\semtrg}_{\trg{\UB(\mathbb{P})}}\), since \(\trg{P'}\) is being discarded, just like \(\trg{C}\).
We chose not to, because \(\trg{C'} \linktrg \trg{P'}\) is the compiled back-translation of \(m\), so \(\trg{P'}\) has defined behaviour in the overlay semantics along \(m\) anyway.
Because this semantics is more restrictive, it is intuitively easier to reason about it, and so we prefer using it there.

\subsection{Replaying the proof}\label{sec:new-approach:replay}

Using the two formulations of blame-aware recomposition, it becomes possible to replay
the proof of Abate \etal~\cite{Abate_2018_arxiv} and obtain \(\rscdcmd\).
As mentioned in \Cref{sec:new-approach:overlay}, we only need to prove \(\rscdcmd\) for instance (2), the ``identity
compiler'' between \(\trglang\) with the overlay semantics and \(\trglang\) with
the low-level semantics.

We need one last additional lemma before proceeding: the maximum prefix lemma.
\begin{lemma}[Maximum prefix]\label{lma:max-prefix}
  \begin{align*}
    &\forall\,\trg{P},\trg{C},m, \trg{C} \linktrg \trg{P} \semtrg m \implies \\
    &\exists\,m', \trg{C} \linktrg \trg{P} \mathrel{{\semtrg}_{\trg{\UB(\mathbb{C})}}} m' \wedge (m = m' \vee m' \prec_{\mathbb{P}} m)
  \end{align*}
\end{lemma}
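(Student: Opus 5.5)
We prove the maximum prefix lemma by induction on the low-level execution \(\operatorname{\trg{initial}}(\trg{C} \linktrg \trg{P}) \trg{\xrightarrow{\com{m}}}^* \trg{s}\). We replay it in the UB-restricting semantics \(\trg{\rightarrow_{\UB(\mathbb{C})}}\) for as long as possible, and stop at the first step that \(\trg{\rightarrow_{\UB(\mathbb{C})}}\) does not allow. First we check that the two semantics share the same initial state, or that the ghost state of the overlay can be initialised deterministically from \(\trg{W}\), so that both executions start from a common state. The induction hypothesis is then: for every prefix of the low-level execution reaching a state \(\trg{s_i}\) with trace \(m_i\), either the same steps form a \(\trg{\rightarrow_{\UB(\mathbb{C})}}\) execution reaching \(\trg{s_i}\) with trace \(m_i\), or some \(m' \prec_{\mathbb{P}} m_i\) is produced in the UB-restricting semantics.

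For the inductive step, take a low-level step \(\trg{s_i} \mathrel{\trg{\xrightarrow{\com{e}}}} \trg{s_{i+1}}\) from a state that is still replayed. There are three cases.
\begin{itemize}
\item If \(\trg{s_i} \in \trg{\mathbb{C}}\), rule \textsc{S-In-I} applies directly and the step is kept.
\item If \(\trg{s_i} \in \trg{\mathbb{P}}\) and the overlay step \(\trg{s_i} \mathrel{\trg{\xrightarrow{\com{e}}_{\mathrm{ov}}}} \trg{s_{i+1}}\) exists, rule \textsc{S-NotIn-I} applies and the step is kept.
\item Otherwise \(\trg{s_i} \in \trg{\mathbb{P}}\) and the overlay semantics has no matching step from \(\trg{s_i}\). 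The UB-restricting execution is then stuck at \(\trg{s_i}\) inside the program part, so it produces \(m_i \cdot \bot_{\mathbb{P}}\). Since \(m_i \leq m\), this gives \(m' = m_i \cdot \bot_{\mathbb{P}} \prec_{\mathbb{P}} m\).
\end{itemize}
If the replay never hits the third case, we obtain \(m' = m\). Once the disjunct \(m' \prec_{\mathbb{P}} m\) holds, extending \(m\) preserves it, because \(\prec_{\mathbb{P}}\) only requires \(m'' \leq m\).

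The third case needs care. We must know how stuckness is recorded as \(\bot_{\mathbb{P}}\) in the traces, namely that a UB-restricting execution stuck in a \(\trg{\mathbb{P}}\) state is seen as producing \(m_i \cdot \bot_{\mathbb{P}}\). We must also rule out the overlay having a step with the same label \(e\) but a different successor. If the overlay semantics is not deterministic, the third case should be phrased as ``no overlay step to \(\trg{s_{i+1}}\)''. Then either the successor states are forced to be equal up to ghost state, or the induction is run on an invariant relating low-level states to states extended with ghost state.

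The main obstacle is the ghost state. The UB-restricting semantics mixes the low-level and overlay relations on the same state space, so low-level states must be embedded into overlay states. After a \textsc{S-In-I} step in the context, the ghost state has to be updated in some canonical way, for instance by letting the ghost code/data tags evolve by a fixed function of the step, or by keeping them invariant because code is static. We would address this by defining the low-level step on extended states so that it preserves the ghost component canonically, and then proving that the embedding commutes with steps. Given that, the induction above is routine.
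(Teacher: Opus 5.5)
Your proposal is correct and follows the paper's proof. The paper also takes the first low-level step that \(\trg{\rightarrow_{\UB(\mathbb{C})}}\) rejects, observes that it must start in \(\trg{\mathbb{P}}\) because \textsc{S-In-I} admits every low-level step from \(\trg{\mathbb{C}}\), and concludes \(\trg{C} \linktrg \trg{P} \mathrel{{\semtrg}_{\trg{\UB(\mathbb{C})}}} m' \cdot \bot_{\mathbb{P}}\); it packages your step-by-step replay as a case split on whether \(m\) itself is reproducible.

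The shared-state-space and stuckness-recording issues you raise are assumed away in the paper's generic setting. Your proposed fix for alternative overlay steps (rephrasing the third case as ``no overlay step to \(\trg{s_{i+1}}\)'') does not by itself make the state stuck. What does is determinism of the low-level relation, which holds for the micro-policy machine and which the paper also uses tacitly. The overlay relation is contained in the low-level one, so rejecting the unique low-level step from \(\trg{s_i}\) leaves no overlay step from it, whatever the label or successor.
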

\begin{proof}
  If \(\trg{C} \linktrg \trg{P} \mathrel{{\semtrg}_{\trg{\UB(\mathbb{C})}}} m\),
  then we are trivially done.

  Otherwise, we know there exist states \(\trg{s}\) and \(\trg{s'}\), and a finite
  trace \(m' < m\) such that
  \(\operatorname{\trg{initial}}(\trg{C} \linktrg \trg{P}) \mathrel{\trg{\xrightarrow{\com{m'}}}^*_{\trg{\UB(\mathbb{C})}}} \trg{s}\),
  \(\trg{s} \mathrel{\trg{\xrightarrow{\com{e}}}} \trg{s'}\), but not
  \(\trg{s} \mathrel{\trg{\xrightarrow{\com{e}}}_{\trg{\UB(\mathbb{C})}}} \trg{s'}\).
  By definition of the UB-restricting semantics, \(\trg{s} \in \trg{\mathbb{P}}\),
  and so we have \(\trg{C} \linktrg \trg{P} \mathrel{{\semtrg}_{\trg{\UB(\mathbb{C})}}} m' \cdot \bot_{\mathbb{P}}\).
\end{proof}
The maximum prefix lemma allows us to lift low-level executions to the
UB-restricting semantics by making the following observation: if
\(\trg{C} \linktrg \trg{P} \semtrg m\), then there exists some prefix
\(m' \leq m\) that can also be produced in the UB-restricting semantics, and
additionally, if it's a strict prefix, then this is because \(\trg{P}\), not
\(\trg{C}\), triggered an undefined behaviour.
Depending on the case, we use one of the two formulations of blame-aware
recomposition, \Cref{def:ba-recomp} or \Cref{def:ba-recomp-ub}, respectively.

We can now prove the result we want:
\begin{theorem}[\(\rscdcmd\) between the overlay semantics and the low-level semantics]
  \begin{align*}
    &\forall\,\trg{P}\in\trg{\mathbb{P}_{\trglang}},\trg{C}\in\trg{\mathbb{C}_{\trglang}},m,\
     \trg{C} \linktrg \trg{P} \semtrg m \implies \\
    &\quad\exists\,\trg{C'}\in\trg{\mathbb{C}_{\trglang}},m',\
     \trg{C'} \linktrg \trg{P} \mathrel{{\semtrg}_{\trg{\mathrm{ov}}}} m' \wedge
      (m \leq m' \vee m' \prec_{\mathbb{P}} m).
  \end{align*}
\end{theorem}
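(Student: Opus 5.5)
We follow the diagram of \Cref{fig:proof-technique}. Assume \(\trg{C} \linktrg \trg{P} \semtrg m\). First, apply the back-translation for the target language at the overlay level, step (1). Concretely, we take the source whole program \(m\mathord{\uparrow} = \src{C_S} \linksrc \src{P''}\) given by \Cref{hyp:back-translation}. We then compile it and use forward compiler correctness (\Cref{hyp:fcc}) for the compiler to the overlay semantics, i.e.\ instance (1) of \Cref{sec:new-approach:overlay}. This yields target components \(\trg{C'} = \cmp[\srclang][\trglang]{\src{C_S}}\) and \(\trg{P'} = \cmp[\srclang][\trglang]{\src{P''}}\) with \(\trg{C'} \linktrg \trg{P'} \ovsemtrg m\). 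Here we use that the back-translation and compiler correctness carry over to the overlay semantics, as argued in \Cref{sec:new-approach:overlay}. The witness \(\trg{C'}\) of the theorem will be this context.

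Second, step (2), apply the maximum prefix lemma (\Cref{lma:max-prefix}) to \(\trg{C} \linktrg \trg{P} \semtrg m\). It gives \(m_0\) with \(\trg{C} \linktrg \trg{P} \mathrel{{\semtrg}_{\trg{\UB(\mathbb{C})}}} m_0\) and either \(m_0 = m\) or \(m_0 \prec_{\mathbb{P}} m\).

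In the first case, step (3), we apply blame-aware recomposition (\Cref{def:ba-recomp}) to the two executions on the same trace \(m\). This gives \(\trg{C'} \linktrg \trg{P} \ovsemtrg m\), and we conclude with \(m' = m\) and the disjunct \(m \leq m'\).

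In the second case, step (4), \(m_0 = m_1 \cdot \bot_{\mathbb{P}}\) for some \(m_1 \leq m\). We need an overlay execution of \(\trg{C'} \linktrg \trg{P'}\) on \(m_1\) itself, not on \(m\). We get it from \(\trg{C'} \linktrg \trg{P'} \ovsemtrg m\) by prefix closure of \(\semtrg\). This holds because \(\semtrg\) is defined via the reflexive-transitive closure of a small-step relation: truncate the execution just before the first step emitting an event beyond \(m_1\). We then apply \Cref{def:ba-recomp-ub} to obtain \(\trg{C'} \linktrg \trg{P} \ovsemtrg m_1 \cdot \bot_{\mathbb{P}}\). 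Taking \(m' = m_1 \cdot \bot_{\mathbb{P}}\), we have \(m' \prec_{\mathbb{P}} m\) directly from \(m_1 \leq m\).

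The main obstacle is bookkeeping around traces rather than any deep step. First, prefix closure must hold for the chosen trace model. Second, the \(\bot_{\mathbb{P}}\) event must be correctly produced by the overlay semantics: a stuck state lying in \(\trg{\mathbb{P}}\) has to be recorded as \(\bot_{\mathbb{P}}\). Third, the back-translation is stated for low-level executions, so its output must be transported to the overlay semantics. For this last point, we would rely on the claim from \Cref{sec:new-approach:overlay} that compiled back-translations have defined overlay behaviour along \(m\). If that claim is unavailable, we would instead state the back-translation step directly as the existence of some \(\trg{C'},\trg{P'}\) with \(\trg{C'} \linktrg \trg{P'} \ovsemtrg m\), which is all the argument needs.
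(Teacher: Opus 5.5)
Your proposal is correct and follows the paper's proof (\Cref{fig:proof-technique}) step for step: back-translation to an overlay execution of \(\trg{C'} \linktrg \trg{P'}\) on \(m\), then the maximum prefix lemma, then \Cref{def:ba-recomp} or \Cref{def:ba-recomp-ub} depending on the case. Your explicit truncation of that overlay execution to the prefix \(m_1\) before applying \Cref{def:ba-recomp-ub} fills in a small prefix-closure step that the paper's proof leaves implicit.
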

\begin{proof}
  The proof is described in \Cref{fig:proof-technique}.
  
  We start from the hypothesis \(\trg{C} \linktrg \trg{P} \semtrg m\) and (1) apply the back-translation to it, to obtain a new target program \(\trg{P'}\) and a new target context \(\trg{C'}\) such that \(\trg{C'} \linktrg \trg{P'} \mathrel{{\semtrg}_{\trg{\mathrm{ov}}}} m\).

  To the same hypothesis, (2) we also apply the maximum prefix lemma (\Cref{lma:max-prefix}), which yields two cases: either \(\trg{C} \linktrg \trg{P} \mathrel{{\semtrg}_{\trg{\UB(\mathbb{C})}}} m\), or \(\trg{C} \linktrg \trg{P} \mathrel{{\semtrg}_{\trg{\UB(\mathbb{C})}}} m' \cdot \bot_{\mathbb{P}}\), with \(m' \prec_{\mathbb{P}} m\).

  In the first case, no undefined behaviour was triggered in \(\trg{P}\) along the execution producing \(m\).
  Hence, (3) we can apply blame-aware recomposition (\Cref{def:ba-recomp}) to this execution and the one from the back-translation, obtaining exactly the first disjunct of the conclusion of \(\rscdcmd\): \(\trg{C'} \linktrg \trg{P} \mathrel{{\semtrg}_{\trg{\mathrm{ov}}}} m\).

  In the second case, \(\trg{P}\) triggered undefined behaviour along that execution, so (4) we apply its variant with explicit undefined behaviour (\Cref{def:ba-recomp-ub}), obtaining the second disjunct directly: \(\trg{C'} \linktrg \trg{P} \mathrel{{\semtrg}_{\trg{\mathrm{ov}}}} m' \cdot \bot_{\mathbb{P}}\) with \(m' \prec_{\mathbb{P}} m\).
\end{proof}

\subsection{Formalization}
We have formalized this generic proof technique in the Rocq proof assistant.
More precisely, we have defined the lemmas for back-translation, for
compiler correctness, and for blame.
We have defined the UB-restricting semantics in a generic manner, and we
prove that in that setting, the maximum prefix lemma holds and that blame-aware recomposition can be used to prove \(\rscdcmd\).

\section{A Low-Level Micro-Policy Backend}\label{sec:backend-change}
In this section, we discuss how our proof technique can be applied
to a concrete low-level backend.

\subsection{The low-level micro-policy backend of Abate \texorpdfstring{\etal}{et al.}}
Abate \etal~\cite{AbateABEFHLPST18} implement and verify a secure compiler for
a compartmentalized C-like language.
In their source language, programs are split into several compartments
\(C\) that can only interact with each other via function calls, mediated
by interfaces that specify what procedures are exported and imported by
each compartment.
As we mentioned earlier, their compiler is only verified to be secure down to an
intermediate language with built-in support for cross-compartment calls and
returns (\ie, they are done with special instructions) and a CompCert-like
block-based memory model.
This gives them the freedom to target different backends, and they implement two
of them: one backend based on software fault isolation (SFI)~\cite{sfi_sosp1993},
and another one based on micro-policies~\cite{micropolicies2015}, which are
essentially programmable tag-based security monitors.
This second backend is the one that is of direct interest to us here and the one
we want to verify.

In that backend, the memory forms a single contiguous block and the only
datatype is that of machine words, which can act as both integers and pointers.
Additionally, on a micro-policy machine, each memory location or register is
given a tag; during execution, at each instruction, the tag of the current PC,
the tags of the operands, the tag of the current instruction, as well as the
opcode are all processed by a monitor that decides, based on these tags, whether
the execution is allowed to continue, and if so, how to update the tags.
The policy that Abate \etal implement is a compartmentalizing policy that mainly
relies on two ideas:
\begin{itemize}
  \item Memory locations are assigned a \emph{colour} that uniquely identifies the compartment they belong to. Compartments cannot read or write data to memory tagged with a colour other than their own.
  \item The stack discipline is enforced using \emph{linear return capabilities}~\cite{ieee_hst2013}. The size of the stack is recorded in the tag of the PC, and incremented every time a cross-compartment call occurs. Return addresses are tagged with a \emph{level} \(\mathtt{n}\) that represents the size of the stack at the point where the return address was created. When jumping using such an address, the policy ensures that the stack has the right size, \(\mathtt{n + 1}\). The monitor also ensures return addresses cannot be duplicated (the ``linear'' part), which prevents reusing the same return address several times.
\end{itemize}

Concretely, their backend consists of a few instructions given below, and its semantics is given by a small-step relation \(\rightarrow\), which itself depends on the compartmentalization micro-policy.
\[
  \begin{array}{lll}
    \mathtt{i} := &|\ \mathtt{Nop}\ &|\ \mathtt{Halt} \\
                  &|\ \mathtt{Const\ i \rightarrow r} &|\ \mathtt{BinOp\ r_1 \oplus r_2 \rightarrow r_d} \\
                  &|\ \mathtt{Bnz\ r\ i}  &|\ \mathtt{Mov\ r_s \rightarrow r_d} \\
                  &|\ \mathtt{Load\ {*r_p} \rightarrow r_d} &|\ \mathtt{Store\ {*r_p} \leftarrow r_s}\\
                  &|\ \mathtt{Jal\ l} &|\ \mathtt{Jump\ r} 
  \end{array}
\]

Memory addresses, registers, and the PC are tagged slightly differently, using the following tags:
\begin{align*}
  \text{Value tags }t_{\mathrm{v}} &:= \mathtt{Ret\ n}\ |\ \bot\\
  \text{Register tags }t_{\mathrm{reg}} &:= t_{\mathrm{v}}\\
  \text{Memory tags }t_{\mathrm{m}} &:= (t_{\mathrm{v}}, c, \mathtt{entry\_point})\\
  \text{PC tags }t_{\mathrm{pc}} &:= (\mathtt{Level\ n}, c).
\end{align*}

The most basic tags, upon which the others are constructed, are \emph{value
  tags} \(t_v\): \(\mathtt{Ret\ n}\) is used to tag return addresses, and
\(\bot\) for other values.
The PC is tagged with a level, which records the size of the stack, and the current compartment \(c\).
Memory locations are tagged using a value tag, to track whether they contain a return address or not; the colour \(c\) is a compartment identifier that identifies which compartment owns the memory location; and \(\mathtt{entry\_point}\) optionally records whether this is an address corresponding to the entry point of a function available to other compartments.

These tags are manipulated according to the rules of the monitor such as the ones in \Cref{fig:monitor-rules}.
Rules are of the form \(\left[P\right]\ \mathtt{Instr}\ \left[Q\right]\), meaning that \(\mathtt{Instr}\) can only be executed when the precondition \(P\) holds, and updates tags according to postcondition \(Q\).
Those conditions are of the form \(d \mapsto t\) where \(d\) is either a register \(r\), the address contained in a register \(*r\), or a code label \(l\), and \(t\) is a tag of the appropriate kind, with some conditions on it.
We use an underscore \(\_\) either when there's no condition on part of the tag (in a precondition), or when that part of the tag isn't modified (in a postcondition).

\begin{figure}[ht]
  \begin{mprules}
    \mprule{Jal\ l}
    {\mathtt{l} \mapsto (\bot, c', e \in \mathtt{allowed}(c)) \\ \mathtt{PC} \mapsto (\mathtt{Level\ n}, c)}
    {\mathtt{PC} \mapsto (\mathtt{Level\ n+1}, c') \\ \mathtt{RA} \mapsto \mathtt{Ret\ n}} \\[4ex]
    \mprule{Jump\ r}
    {\mathtt{r} \mapsto \mathtt{Ret\ n} \\ \mathtt{*r} \mapsto (\_, c', \_) \\ \mathtt{PC} \mapsto (\mathtt{Level\ n+1}, c)}
    {\mathtt{PC} \mapsto (\mathtt{Level\ n}, c') \\ \mathtt{r} \mapsto \bot} \\[5.5ex]
    \mprule{Load\ {*r_p} {\rightarrow} r_d}
    {\mathtt{*r_p} \mapsto (t_{\mathrm{v}}, c, {\_}) \\ \mathtt{PC} \mapsto ({\_}, c)}
    {\mathtt{*r_p} \mapsto (\bot, c, {\_}) \\ \mathtt{r_d} \mapsto t_{\mathrm{v}}}
  \end{mprules}
  \caption{Micro-policy describing the linear return policy for Abate \etal's backend}\label{fig:monitor-rules}
\end{figure}

The three rules depicted in \Cref{fig:monitor-rules} form the core of the linear return capability policy.
The first rule, for \(\mathtt{Jal}\), describes what happens during a cross-compartment call (calls are made using the \(\mathtt{Jal}\) instruction).\footnote{There are also rules for internal jumps that do not manipulate the tags other than checking that there is no change of compartment, but we don't depict them here as they aren't relevant to the cross-compartment concerns.}
The preconditions require the monitor to check that the label points to code belonging to a compartment \(c'\) that the current compartment \(c\) (indicated by the tag of the PC: \(\mathtt{PC} \mapsto (\mathtt{Level\ n}, c)\)) is allowed to call (\(e \in \mathop{\mathtt{allowed}}(c)\)), the set \(\mathop{\mathtt{allowed}}(c)\) being computed statically based on the interfaces.
The postcondition indicates how the monitor updates the tags: the PC is now tagged with the new compartment \(c'\), and its level incremented.
The new return capability created in the RA register must then be tagged with the former PC level, \(n\).

The next rule, for \(\mathtt{Jump}\), describes what happens when doing a jump using a return capability, \ie a cross-compartment return.
The precondition describes the current tags: the current compartment is \(c\), and the stack is at level \(\mathtt{n+1}\); meanwhile, the jump must use a return capability at just one level below, \(\mathtt{n}\), and this capability points to compartment \(c'\).
The postcondition describes the updated tags after the return: the level of the PC has been decremented, we are now in compartment \(c'\), and importantly the return capability has been invalidated, ensuring it can't be used again in another return (\(\mathtt{r} \mapsto \bot\)).

Finally, the rule for \(\mathtt{Load}\) is the one that ties everything together.
The precondition requires reading at an address belonging to the same compartment \(c\) as the PC, and that it has a tag \(t_{\mathrm{v}}\) (\(\mathtt{*r_p} \mapsto (t_{\mathrm{v}}, c, \_)\)).
The postcondition transfers that tag to the destination register (\(\mathtt{r_d} \mapsto t_{\mathrm{v}}\)) and invalidates the memory tag (\(\mathtt{*r_p} \mapsto (\bot, c, {\_})\)).
If \(t_{\mathrm{v}}\) was already \(\bot\), this doesn't change anything.
However, if \(t_{\mathrm{v}}\) was \(\mathtt{Ret\ n}\) for some \(\mathtt{n}\), then the rule ensures that there only ever exists one copy of that capability, hence the name ``linear''.

The step relation \(\rightarrow\) uses the monitor to check and modify the tags at every step. Otherwise, it is unsurprising, so it is not reproduced here.

\subsection{Why recomposition doesn't work}\label{sec:recomp-doesnt-work}
In order to apply our proof technique to Abate \etal's backend, we start by identifying why recomposition doesn't hold, \ie what are the problematic behaviours programs can have that are not recorded on the trace, and allow compartments to learn something about other compartments.
In this backend, these behaviours fall in two categories, corresponding to the two examples from the introduction.

\emph{Direct code observations} are of two kinds.
The first corresponds to \Cref{ex:direct-read} and the example from the previous section. Compartments can freely use integers as pointers, so they read from arbitrary locations in the code, obtaining information either directly about the private implementation of other compartments, or indirectly by being able to observe the code layout.
The second kind is that compartments have access to pointers to their own code, via the RA register for instance.
By converting those to integers, they learn immediately where their code is laid out in memory, which is something that depends on the code of the other compartments.
Note that this is not simply following a code pointer: here, the code pointer is just inspected by an operation or comparison.

The other category is what we call \emph{leaky allocations}, and corresponds to \Cref{ex:alloc-observation}.
These arise from the ability to cast dynamically allocated pointers to integers, which allows compartments to obtain information covertly, and from the memory allocation mechanism.
The micro-policy backend's allocator is a standard bump allocator: it allocates memory sequentially by keeping a pointer to the start of the not-yet-allocated region, bumping it by the requested amount and tagging the region between with the colour of the compartment that allocates, which can leak information about the allocation behaviour.
Recall \Cref{ex:alloc-observation}: \code{C1.f} allocates \code{x}, calls \code{C2.g}, allocates \code{y}, and then compares them.
If \code{C2.g} doesn't allocate, we get the following memory layout:
\begin{center}
  \begin{tikzpicture}[
    ptr/.style={|-{Stealth},shorten >=1pt},
    lbl/.style={font=\footnotesize}]
    \draw (-0.8,0) -- (6.1,0) (-0.8,0.6) -- (6.1,0.6) (0,0) -- (0,0.6);
    \node[lbl] at (-0.4,0.3) {\dots};
    \foreach \a/\b/\c in {0/0.8/C1, 0.8/1.6/C1} {
      \draw (\b,0) -- (\b,0.6);
      \node[lbl] at ({(\a+\b)/2},0.3) {\code{\c}};
    }
    \node[lbl] at (2.4,0.3) {\dots};
    \draw[ptr] (0.4,-0.8) node[below,lbl] {\code{x}} -- (0.4,0);
    \draw[ptr] (1.2,-0.8) node[below,lbl] {\code{y}} -- (1.2,0);
    \node[right,lbl] at (6.2,0.3) {\(\mathtt{y} - \mathtt{x} = 1\)};
  \end{tikzpicture}
\end{center}
In this case, \code{x} and \code{y} point to two locations next to each other, since nothing was allocated by \code{C2.g}. Hence, \(\mathtt{y} - \mathtt{x} = 1\).
On the other hand, if \code{C2.g} does allocate, we get the following memory layout:
\begin{center}
  \begin{tikzpicture}[
    ptr/.style={|-{Stealth},shorten >=1pt},
    lbl/.style={font=\footnotesize}]
    \draw (-0.8,0) -- (6.1,0) (-0.8,0.6) -- (6.1,0.6) (0,0) -- (0,0.6);
    \node[lbl] at (-0.4,0.3) {\dots};
    \foreach \a/\b/\c in {0/0.8/C1, 0.8/4.6/C2, 4.6/5.4/C1} {
      \draw (\b,0) -- (\b,0.6);
      \node[lbl] at ({(\a+\b)/2},0.3) {\code{\c}};
    }
    \node[lbl] at (5.75,0.3) {\dots};
    \draw[ptr] (0.4,-0.8) node[below,lbl] {\code{x}} -- (0.4,0);
    \draw[ptr] (5,-0.8) node[below,lbl] {\code{y}} -- (5,0);
    \draw[decorate,decoration={brace,mirror,amplitude=4pt}] (0.85,-0.1) -- (4.55,-0.1)
    node[midway,below=5pt,lbl,align=center] {memory allocated\\by \code{C2.g}};
    \node[right,lbl] at (6.2,0.3) {\(\mathtt{y} - \mathtt{x} > 1\)};
  \end{tikzpicture}
\end{center}
After allocating some memory for \code{x}, the allocator has allocated a block of memory right next to it for \code{C2.g}, and then allocated again next to it for \code{y}.
Hence, this time \(\mathtt{y} - \mathtt{x} > 1\), an observation that breaks recomposition.

We reiterate that these observations being possible should not affect the top-level theorem, because the top-level theorem, as a variant of \(\rsp\), is only concerned with safety properties, and these safety properties cannot capture attacks against the privacy of the code.

We've seen in \Cref{sec:new-approach:overlay} that one solution would be to define an overlay semantics that prevents all these behaviours.
However, leaks through the allocation mechanism are much harder than code observations to characterize in an overlay semantics.
Indeed, it is difficult to accurately characterize all the observations through this kind of cast that break recomposition: doing so would already require a fine-grained dynamic analysis that tracks the provenance of pointers and how they are used, but it would also probably stop many legitimate behaviours that do not lead to leaks, thus strongly restricting the reach of our theorem.
Thus, we took a path of lesser resistance and opted to slightly modify the backend to prevent those entirely.
This is not entirely unreasonable: the backend is already preventing all kinds of undesirable behaviours, and our change is quite minimal, as it is only a slight change to the behaviour of the allocator.
Furthermore, hiding the allocation pattern would be a required change if we wanted to prove a stronger security result that requires confidentiality of data.
We describe this change in the next section.

\subsection{Hiding the memory allocation pattern in the backend}
We implement a masking protection in the allocator that stops compartments from obtaining information about other compartments' behaviour by comparing dynamically allocated pointers.
Our changes to the allocator are inspired by software-fault isolation, a commonly deployed technique to implement compartmentalization~\cite{YeeSDCMOONF10,NarayanDGFRLSS20,CastroCMPADBB09}.

Our allocator divides the memory into regions of equal size by reserving the first bits of each pointer to encode a unique compartment identifier.
This means that for a compartment \(C\), all pointers allocated by \(C\) will start with a specific bit pattern unique to that compartment.
For instance, if we assign prefix \code{0001} to \code{C1} and \code{000F} to \code{C2}, the resulting memory layout is as follows:
\begin{center}
  \begin{tikzpicture}[
    ptr/.style={|-{Stealth},shorten >=1pt},
    lbl/.style={font=\footnotesize,text height=1.5ex,text depth=.25ex}]
    \foreach \y/\lo/\hi/\name/\prefix in {0/0x00010000/0x0001FFFF/C1/0001, -2.6/0x000F0000/0x000FFFFF/C2/000F} {
      \begin{scope}[yshift=\y cm]
        \draw (-0.8,0) -- (6.8,0) (-0.8,0.6) -- (6.8,0.6) (0,0) -- (0,0.6) (6,0) -- (6,0.6);
        \node[lbl] at (-0.4,0.3) {\dots};
        \node[lbl] at (6.4,0.3) {\dots};
        \draw[{Stealth}-,shorten <=1pt] (0,0) -- (0,-0.4) node[below,lbl] {\code{\lo}};
        \draw[{Stealth}-,shorten <=1pt] (6,0) -- (6,-0.4) node[below,lbl] {\code{\hi}};
        \draw[decorate,decoration={brace,amplitude=4pt}] (0.05,0.75) -- (5.95,0.75)
        node[midway,above=5pt,lbl] {\code{\name}'s space (prefix \code{\prefix})};
      \end{scope}
    }
    \draw (1,0) -- (1,0.6) (1.8,0) -- (1.8,0.6) (2.6,0) -- (2.6,0.6);
    \node[lbl] at (0.5,0.3) {\dots};
    \draw[ptr] (1.4,-0.4) node[below,lbl] {\code{x}} -- (1.4,0);
    \draw[ptr] (2.2,-0.4) node[below,lbl] {\code{y}} -- (2.2,0);
  \end{tikzpicture}
\end{center}
With this scheme, \code{C1}'s allocations will only ever live between addresses \code{0x00010000} and \code{0x0001FFFF}, and \code{C2}'s allocations will only ever live between addresses \code{0x000F0000} and \code{0x000FFFFF}, ensuring they are completely disjoint.
Hence, we will always have \(\mathtt{y} - \mathtt{x} = 1\): this partitioning hides each compartment's allocation pattern.

Note that contrary to some existing SFI implementations, we do not rewrite reads using masking, and the partitioning into several regions is not a substitute for the tagging mechanism: a compartment is free to try to access any memory location, and it is the micro-policy that stops it from accessing the other compartments' data.

We do not partition the code section of the program.
The first reason for this is technical: jump labels appearing in the \(\mathtt{Jal}\) instructions are encoded as immediates that are smaller than the full range of a pointer; redirecting those to the memory space of a compartment would require substantial changes to the semantics of \(\mathtt{Jal}\), which we explicitly wanted to avoid.
Instead, we decided to put the code of all compartments contiguously at the beginning of the memory (starting with prefix \code{0000}).
The second reason is conceptual: observations that leak information about the code layout can be reasonably easily characterized as undefined behaviour in an overlay semantics, so unlike memory allocation, we don't need to change the backend for this.

\paragraph{Discussion} One may wonder whether the changes to the allocator are too large.
We do not believe this is the case.
Indeed, while such protections are not necessary to prove \(\rsp\) and its variants, including \(\rscdcmd\), they are still desirable if one is interested in confidentiality, and are necessary to prove stronger security criteria.
Moreover, protections such as using a custom allocator and masking are already commonly used~\cite{NarayanDGFRLSS20,KolosickNJWLGJS22,MorrisettTTTG12,YeeSDCMOONF10}.
Finally, it is true that our custom allocator limits the possible number of compartments, and the memory available for such compartments.
We believe this to be a necessary trade-off between security, performance, usability and proof engineering.

The remaining question is whether we could have found a proper characterization of memory allocation-related leaks as undefined behaviour for the overlay semantics.
We believe this task to be a difficult one and beyond the scope of this paper.
Indeed, in the block-based memory model of CompCert, comparing and computing with pointers is allowed when they are pointers to the same block, so preventing all pointer comparisons seems too restrictive.
Tightening this restriction would also be difficult: at the micro-policy machine level, we have already lost semantic information about which pointers used to point to the same block, and we can't retrieve it from compilation, because not all the programs we consider are compiled.
Hence, we believe this change to the allocator to be the most reasonable solution.

\section{Proving Blame-aware Recomposition}\label{sec:recomp-proofs}

We are now ready to discuss how we can apply the proof technique to this modified backend. We start by describing the overlay semantics we introduce, and then how we prove blame-aware recomposition.

\subsection{The Overlay Semantics}
Our overlay semantics uses a variant of the low-level micro-policy, which we call the \emph{overlay micro-policy}.
Its goal is to prevent the code observations described in \Cref{sec:recomp-doesnt-work}.
To do so, we add a boolean \(b\) to memory tags that records whether a location contains code.
\[
  t_{\mathrm{m}} := (t_{\mathrm{v}}, c, \mathtt{entry\_point}, b)
\]
This boolean is set statically and is completely immutable, and our overlay micro-policy uses it to prevent leaking the layout of the code.

The first major change in the overlay micro-policy is that it prevents reading code, whereas this is always allowed in the low-level micro-policy (code isn't private).
\begin{mathpar}
    \mprule{Load\ {*r_p} {\rightarrow} r_d}{\mathtt{*r_p} \mapsto (t_{\mathrm{v}}, c, {\_}, \mathtt{false}) \\ \mathtt{PC} \mapsto ({\_}, c)}{\dots}
\end{mathpar}

To help with hiding the code layout, we also make the value tags more granular, so that the machine can distinguish more types of values:
\[
  t_{\mathrm{v}} := \mathtt{Ret\ n}\ |\ \mathtt{InternalJump}\ |\ \mathtt{Inv}\ |\ \bot.
\]
The tags \(\mathtt{Ret\ n}\) and \(\bot\) are unchanged, and we introduce two new tags, \(\mathtt{InternalJump}\), for return addresses of internal calls, and \(\mathtt{Inv}\), standing for ``invalidated'', for cross-compartment return capabilities that have been used.

The invariant our overlay micro-policy maintains is that all code pointers one gets during execution (\ie return addresses) are tagged with either \(\mathtt{Ret\ n}\), \(\mathtt{InternalJump}\), or \(\mathtt{Inv}\).
Then, we simply prevent values with such tags from being used in computations or comparisons.
For instance, the \(\mathtt{Bnz}\) rule becomes the following:
\begin{mathpar}
  \mprule{Bnz\ r\ i}
  {\mathtt{r} \mapsto \bot \\
    \mathtt{PC} \mapsto ({\_}, c)\\
    \mathtt{PC + i} \mapsto ({\_}, c, {\_}, {\_}) }
  {\cdot}.
\end{mathpar}
The tag of the operand is required to be \(\bot\), meaning it can't be a pointer to code.

To tag code pointers properly, we update the micro-policy rules for the instructions that create and use them.
The only instruction that actually creates code pointers is \(\mathtt{Jal}\).
The rule for cross-compartment \(\mathtt{Jal}\) already tags the return address as \(\mathtt{Ret\ n}\); now, the rule for internal \(\mathtt{Jal}\) also tags the return address as \(\mathtt{InternalJump}\):
\begin{mathpar}
  \mprule{Jal\ l}
  {\mathtt{l} \mapsto (\bot, c, {\_}, {\_}) \\ \mathtt{PC} \mapsto ({\_}, c)}
  {\mathtt{PC} \mapsto ({\_}, c) \\ \mathtt{RA} \mapsto \mathtt{InternalJump}}.
\end{mathpar}
Because this is an internal jump, there is no need to track its level.

Now, recall the previous rule for cross-compartment \(\mathtt{Jump}\): it set the tag of the operand to \(\bot\), to prevent using it again, as required by the linear return policy.
The issue is that \(\bot\)-tagged values can still be used in comparisons and computations, so we have to change the cross-compartment \(\mathtt{Jump}\) to \emph{invalidate} it.
\begin{mathpar}
  \mprule{Jump\ r}
  {\mathtt{r} \mapsto \mathtt{Ret\ n} \\ \mathtt{*r} \mapsto ({\_}, c', {\_}, {\_}) \\ \mathtt{PC} \mapsto (\mathtt{Level\ n+1}, c)}
  {\mathtt{PC} \mapsto (\mathtt{Level\ n}, c') \\ \mathtt{r} \mapsto \mathtt{Inv}}
\end{mathpar}
That way, we ensure that the return address will not be used.
We need the same change for \(\mathtt{Load}\), \(\mathtt{Store}\), and \(\mathtt{Mov}\): they now invalidate return addresses.
Finally, we need to change the rule for internal \(\mathtt{Jump}\) to require having an \(\mathtt{InternalJump}\):
\begin{mathpar}
  \mprule{Jump\ r}
  {\mathtt{r} \mapsto \mathtt{InternalJump} \\ \mathtt{*r} \mapsto ({\_}, c, {\_}, {\_}) \\ \mathtt{PC} \mapsto ({\_}, c)}
  {\cdot}
\end{mathpar}
We need this, because absolute jumps can leak information about the code layout, just like reading the code.
Note that contrary to the cross-compartment jumps, internal jumps are not subject to the linear policy (the tags aren't modified): a compartment is free to reuse the same return address several times.

There are two additional mundane changes: the overlay micro-policy prevents executing instructions not tagged as code, and it prevents writing to code, which are needed to ensure the confidentiality of code.

Note that we didn't try to design the minimal overlay semantics: it is possible that some of these rules are not strictly necessary in our proof (for instance, preventing executing data may not be fully necessary).
However, the overlay semantics is only a tool for proving \(\rscdcmd\); these rules make reasoning about the behaviour of the code simpler, and since the source language doesn't allow executing data or rewriting code, these rules do not overly restrict the behaviours of compiled programs.

\subsection{Proving Blame-Aware Recomposition, in Practice}

Recall that our goal is to prove blame-aware recomposition:
\[
  \forall m,
  \left.
    \begin{alignedat}{5}
      &\trg{C} &\linktrg\, & \trg{P} && \mathrel{\semtrg}_{\trg{\UB(\mathbb{C})}} & m\\
      &\trg{C'} &\linktrg\, & \trg{P'} && \ovsemtrg & m
    \end{alignedat}
\right\}
\implies
\trg{C'} \linktrg \trg{P} \ovsemtrg m.
\]

\textbf{Approach}
We prove that result (and its variant with undefined behaviour) by reusing the simulation diagrams that Thibault \etal~\cite{ThibaultBLAAGHT24} introduced to prove recomposition in SECOMP and that we reproduce in \Cref{fig:recomposition-diagrams}.
Because these diagrams were originally phrased in terms of arbitrary labelled transition systems, the fact that we use different semantics in our statement is not a problem and we can apply the generic result from SECOMP in our setting.

\begin{figure*}
  \centering
\begin{subfigure}{0.3\textwidth}
  \centering
  \begin{tikzpicture}
    \tikzstyle{every node}=[BlueViolet]
    \tikzstyle{every path}=[BlueViolet]
    \node[draw,shape=circle,fill=BlueViolet,inner sep=0.5pt](ini1) {};
    \node[above left = 0.1pt of ini1, inner sep=0.5pt] (textini1) {\(s_{1}\)};
    \node[draw,shape=circle,fill=BlueViolet,inner sep=0.5pt,right = 2cm of ini1] (after1) {};
    \node[above right = 0.1pt of after1, inner sep=0.5pt] (textafter3) {\(s_{1}'\)};
    \node[draw,shape=circle,fill=BlueViolet,inner sep=0.5pt,below = of ini1] (ini2) {};
    \node[above right = 0.1pt of ini2, inner sep=0.5pt] (textini2) {\(s_{2}\)};
    \node[right = 2cm of ini2] (after2) {};
    \node[draw,shape=circle,fill=BlueViolet,inner sep=0.5pt,below = of ini2] (ini3) {};
    \node[below left = 0.1pt of ini3, inner sep=0.5pt] (textini3) {\(s_{3}\)};
    \node[draw,shape=circle,fill=PineGreen,PineGreen,inner sep=0.5pt,right = 2cm of ini3] (after3) {};
    \node[below right = 0.1pt of after3, inner sep=0.5pt] (textafter3) {\(\color{PineGreen}s_{3}'\)};

    \draw[->,shorten >= 2pt, shorten <= 2pt,thick] (ini1) to node [above] {\(\varepsilon\)} (after1.-180);
    \draw[->,shorten >= 2pt, shorten <= 2pt,color=PineGreen] (ini3) to node [above] {\(\color{PineGreen}\varepsilon\)} (after3.-180);

    \draw[shorten >= 2pt, shorten <= 2pt,thick] (ini1) to[bend right] node [left] {\(\equiv\)} (ini3);
    \draw[shorten >= 2pt, shorten <= 2pt, densely dashed,thick] (ini2) to[bend left] node [right] {\(\sim\)} (ini3);

    \draw[shorten >= 2pt, shorten <= 2pt, PineGreen] (after1) to[bend left] node [right] {\(\color{PineGreen}\equiv\)} (after3);
    \draw[shorten >= 2pt, shorten <= 2pt, densely dashed, PineGreen] (ini2) to[bend left] node [right] {\(\color{PineGreen}\sim\)} (after3);

    \draw[dotted,thick] (-0.75,0.5) rectangle (0.5,-2.5);
    \node[left = 0.75cm of ini3, inner sep=0.5pt] (m) {\(\mathcal{M}\)};

    \draw[dotted,thick,PineGreen] (1.75,0.5) rectangle (3,-2.5);
    \node[right = 0.92cm of after3, inner sep=0.5pt] (mm) {\(\color{PineGreen}\mathcal{M}\)};
  \end{tikzpicture}
  \vspace{-0.5em}
  \caption{Silent step in strongly related states}\label{fig:recomposition-diagrams:silent-strong}
\end{subfigure}
\begin{subfigure}{0.3\textwidth}
  \centering
  \begin{tikzpicture}
    \tikzstyle{every node}=[BlueViolet]
    \tikzstyle{every path}=[BlueViolet]

    \node[draw,shape=circle,fill=BlueViolet,inner sep=0.5pt](ini1) {};
    \node[above left = 0.1pt of ini1, inner sep=0.5pt] (textini1) {\(s_{1}\)};

    \node[draw,shape=circle,fill=BlueViolet,inner sep=0.5pt,below = of ini1](ini2) {};
    \node[above right = 0.1pt of ini2, inner sep=0.5pt] (textini2) {\(s_{2}\)};

    \node[draw,shape=circle,fill=BlueViolet,inner sep=0.5pt,right = 2cm of ini2] (after2) {};
    \node[above right = 0.1pt of after2, inner sep=0.5pt] (textafter1) {\(s_{2}'\)};

    \node[draw,shape=circle,fill=BlueViolet,inner sep=0.5pt,below = of ini2] (ini3) {};
    \node[below left = 0.1pt of ini3, inner sep=0.5pt] (textini3) {\(s_{3}\)};

    \node[draw,shape=circle,fill=BlueViolet,inner sep=0.5pt,right = 2cm of ini1] (equ1) {};
    \node[above right = 0.1pt of after1, inner sep=0.5pt] (textafter3) {\(s_{1}\)};
    \draw[double,shorten >= 2pt, shorten <= 2pt] (ini1) to[bend left] node [below] {\(=\)} (equ1);

    \node[draw,shape=circle,fill=PineGreen,inner sep=0.5pt,right = 2cm of ini3] (equ3) {};
    \node[below right = 0.1pt of after3, inner sep=0.5pt] (textafter3) {\(s_{3}\)};
    \draw[double,shorten >= 2pt, shorten <= 2pt] (ini3) to[bend right] node [above] {\(=\)} (equ3);

    \draw[->,shorten >= 2pt, shorten <= 2pt,thick] (ini2) to node [above] {\(\varepsilon\)} (after2.-180);

    \draw[shorten >= 2pt, shorten <= 2pt,thick] (ini1) to[bend right] node [left] {\(\equiv\)} (ini3);
    \draw[shorten >= 2pt, shorten <= 2pt, densely dashed,thick] (ini2) to[bend left] node [right] {\(\sim\)} (ini3);

    \draw[shorten >= 2pt, shorten <= 2pt, densely dashed, PineGreen] (ini3) to[bend right] node [right] {\(\color{PineGreen}\sim\)} (after2);

    \draw[dotted,thick] (-0.75,0.5) rectangle (0.5,-2.5);
    \node[left = 0.75cm of ini3, inner sep=0.5pt] (m) {\(\mathcal{M}\)};

    \draw[dotted,thick,PineGreen] (1.75,0.5) rectangle (3,-2.5);
    \node[right = 0.92cm of after3, inner sep=0.5pt] (mm) {\(\color{PineGreen}\mathcal{M}\)};
  \end{tikzpicture}
  \vspace{-0.5em}
  \caption{Silent step in weakly related states}\label{fig:recomposition-diagrams:silent-weak}
\end{subfigure}
\begin{subfigure}{0.3\textwidth}
  \centering
  \begin{tikzpicture}
    \tikzstyle{every node}=[BlueViolet]
    \tikzstyle{every path}=[BlueViolet]

    \node[draw,shape=circle,fill=BlueViolet,inner sep=0.5pt](ini1) {};
    \node[above left = 0.1pt of ini1, inner sep=0.5pt] (textini1) {\(s_{1}\)};

    \node[draw,shape=circle,fill=BlueViolet,inner sep=0.5pt,right = 2cm of ini1] (after1) {};
    \node[above right = 0.1pt of after1, inner sep=0.5pt] (textafter1) {\(s_{1}'\)};

    \node[draw,shape=circle,fill=BlueViolet,inner sep=0.5pt,below = of ini1] (ini2) {};
    \node[above right = 0.1pt of ini2, inner sep=0.5pt] (textini2) {\(s_{2}\)};

    \node[draw,shape=circle,fill=BlueViolet,inner sep=0.5pt,right = 2cm of ini2] (after2) {};
    \node[above right = 0.1pt of after2, inner sep=0.5pt] (textafter2) {\(s_{2}'\)};

    \node[draw,shape=circle,fill=BlueViolet,inner sep=0.5pt,below = of ini2] (ini3) {};
    \node[below left = 0.1pt of ini3, inner sep=0.5pt] (textini3) {\(s_{3}\)};
    \node[draw,shape=circle,fill=PineGreen,PineGreen,inner sep=0.5pt,right = 2cm of ini3] (after3) {};
    \node[below right = 0.1pt of after3, inner sep=0.5pt] (textafter3) {\(\color{PineGreen}s_{3}'\)};

    \draw[shorten >= 2pt, shorten <= 2pt,thick] (ini1) to[bend right] node [left] {\(\equiv\)} (ini3);
    \draw[shorten >= 2pt, shorten <= 2pt, densely dashed,thick] (ini2) to[bend left] node [right] {\(\sim\)} (ini3);

    \draw[shorten >= 2pt, shorten <= 2pt, densely dashed, PineGreen] (after1) to[bend left] node [right] {\(\color{PineGreen}\sim\)} (after3);
    \draw[shorten >= 2pt, shorten <= 2pt, PineGreen] (after2) to[bend right] node [left] {\(\color{PineGreen}\equiv\)} (after3);

    \draw[->,shorten >= 2pt, shorten <= 2pt,thick] (ini1) to node [above] {\(a\)} (after1.-180);
    \draw[->,shorten >= 2pt, shorten <= 2pt,thick] (ini2) to node [above] {\(a\)} (after2.-180);
    \draw[->,shorten >= 2pt, shorten <= 2pt,color=PineGreen] (ini3) to node [above] {\(\color{PineGreen}a\)} (after3.-180);

    \draw[dotted,thick] (-0.75,0.5) rectangle (0.5,-2.5);
    \node[left = 0.75cm of ini3, inner sep=0.5pt] (m) {\(\mathcal{M}\)};

    \draw[dotted, PineGreen] (1.75,0.5) rectangle (3,-2.5);
    \node[right = 0.92cm of after3, inner sep=0.5pt] (mm) {\(\color{PineGreen}\mathcal{M}\)};
  \end{tikzpicture}
  \vspace{-0.5em}
  \caption{Non-silent step with swapping relations}\label{fig:recomposition-diagrams:event-swap}
\end{subfigure}
  \caption{Recomposition diagrams from SECOMP~\cite{ThibaultBLAAGHT24}}\label{fig:recomposition-diagrams}
\end{figure*}

To understand the general intuition behind these diagrams, recall the recomposition diagrams from \Cref{sec:new-approach}, and let us zoom in on a call and its surrounding execution.
\par{\centering
  \begin{tikzpicture}[
    st/.style={circle,fill,inner sep=0pt,minimum size=2pt},
    tr/.style={->,shorten >=2pt,shorten <=2pt},
    star/.style={pos=1,anchor=south east,inner sep=0pt,xshift=1pt,yshift=0.5pt,font=\scriptsize}]
    \foreach \r/\name/\yf/\yg in {1/W/0/0, h/W½/-0.6/-1.4, 2/W'/-2/-2} {
      \node[left,font=\footnotesize] at (-0.3,\yf) {\code{\name}};
      \foreach \i/\x in {0/0, 1/0.9, 2/1.8, 3/2.7}
        \node[st] (\r-\i) at (\x,\yf) {};
      \foreach \i/\x in {4/3.9, 5/4.8, 6/5.7, 7/6.6}
        \node[st] (\r-\i) at (\x,\yg) {};
      \foreach \u/\v in {0/1, 1/2, 2/3, 4/5, 5/6, 6/7}
        \draw[tr] (\r-\u) -- node[star] {\(*\)} (\r-\v);
      \draw[tr] (\r-3) -- (\r-4);
    }
    \node[font=\scriptsize\ttfamily,above,inner sep=1.5pt] at ($(1-3)!0.5!(1-4)$) {C2.g(27)};
    \draw[decorate,decoration={brace,amplitude=4pt},black] (-0.05,0.55) -- (2.75,0.55)
      node[midway,above=4pt,font=\footnotesize] {executing \code{C1.f}};
    \draw[decorate,decoration={brace,amplitude=4pt},black] (3.85,0.55) -- (6.65,0.55)
      node[midway,above=4pt,font=\footnotesize] {executing \code{C2.g}};
    \path (h-3) -- node[font=\scriptsize\ttfamily,above,sloped,inner sep=1.5pt] {C2.g(27)} (h-4);
    \node[font=\scriptsize\ttfamily,above,inner sep=1.5pt] at ($(2-3)!0.5!(2-4)$) {C2.g(27)};
  \end{tikzpicture}
\par}
The executions represented by this diagram start in \code{C1.f}, and after a few silent steps, there is a call to \code{C2.g}, after which there are a few more silent steps.
The middle execution is that of \code{W½}, built by taking \code{C1} from \code{W} and \code{C2} from \code{W'}.
What happens is the following: when executing \code{W½} within \code{C1.f}, since \code{W½} took \code{C1} from \code{W}, the execution will be more similar to the one of \code{W}, and conversely, when executing within \code{C2.g}, it will be more similar to \code{W'}.
This is because we're executing the same code, and the behaviour of a compartment can only be influenced by what it can observe from other compartments, \ie what appears on the trace.

To formalize this intuition, SECOMP (taking inspiration from the ideas of El-Korashy \etal~\cite{akram-capabileptrs,secureptrs}) puts forward the idea of proving a three-way simulation using three relations: a \emph{strong simulation relation} \(\equiv\), a \emph{weak simulation relation} \(\sim\), and a \emph{mixed simulation relation} \(\mathcal{M}\).
When executing within \code{C1}, the strong relation \(\equiv\) relates the states of \code{W} and \code{W½}, and the weak relation \(\sim\) relates the states of \code{W'} and \code{W½}.
The mixed relation \(\mathcal{M}\) is a ternary relation that relates all three states, to capture invariants that are common to the three executions, such as the shape of the cross-compartment stack.

Then, SECOMP gives several diagrams to prove, such that together they imply recomposition, or in our case blame-aware recomposition.
\Cref{fig:recomposition-diagrams} depicts the core three diagrams.
In that figure, each row corresponds to one of the three executions, with the recomposed execution being at the bottom; arrows are steps (or their reflexive transitive closure); in thick dark blue are hypotheses, and in light green are conclusions to prove; and the double line represents equality.
To illustrate their use, we have overlaid these diagrams on top of the recomposition diagram:
\begin{center}
  \begin{tikzpicture}[
    hyp/.style={thick,BlueViolet,shorten >=2pt,shorten <=2pt},
    concl/.style={PineGreen,shorten >=2pt,shorten <=2pt},
    rel/.style={font=\scriptsize,inner sep=1pt},
    st/.style={circle,fill,inner sep=0pt,minimum size=2pt},
    tr/.style={->,shorten >=2pt,shorten <=2pt},
    star/.style={pos=1,anchor=south east,inner sep=0pt,xshift=1pt,yshift=0.5pt,font=\scriptsize}]
    \begin{scope}[black!35]
      \foreach \r/\name/\yf/\yg in {1/W/0/0, h/W½/-0.6/-1.4, 2/W'/-2/-2} {
        \node[left,font=\footnotesize,black] at (-0.3,\yf) {\code{\name}};
        \foreach \i/\x in {0/0, 1/0.9, 2/1.8, 3/2.7}
          \node[st] (\r-\i) at (\x,\yf) {};
        \foreach \i/\x in {4/3.9, 5/4.8, 6/5.7, 7/6.6}
          \node[st] (\r-\i) at (\x,\yg) {};
        \foreach \u/\v in {0/1, 1/2, 2/3, 4/5, 5/6, 6/7}
          \draw[tr] (\r-\u) -- node[star] {\(*\)} (\r-\v);
        \draw[tr] (\r-3) -- (\r-4);
      }
      \node[font=\scriptsize\ttfamily,above,inner sep=1.5pt] at ($(1-3)!0.5!(1-4)$) {C2.g(27)};
    \draw[decorate,decoration={brace,amplitude=4pt},black] (-0.05,0.55) -- (2.75,0.55)
      node[midway,above=4pt,font=\footnotesize] {executing \code{C1.f}};
    \draw[decorate,decoration={brace,amplitude=4pt},black] (3.85,0.55) -- (6.65,0.55)
      node[midway,above=4pt,font=\footnotesize] {executing \code{C2.g}};
      \path (h-3) -- node[font=\scriptsize\ttfamily,above,sloped,inner sep=1.5pt] {C2.g(27)} (h-4);
      \node[font=\scriptsize\ttfamily,above,inner sep=1.5pt] at ($(2-3)!0.5!(2-4)$) {C2.g(27)};
    \end{scope}
    \foreach \x/\fig in {0.45/silent-strong, 3.3/event-swap, 6.15/silent-weak}
      \node[font=\footnotesize] at (\x,-2.45) {\Cref{fig:recomposition-diagrams:\fig}};
    \draw[hyp,->] (1-0) -- (1-1);
    \draw[hyp] (1-0) to[bend right=45] node[left,rel] {\(\equiv\)} (h-0);
    \draw[hyp,densely dashed] (2-0) to[bend left=30] node[left,rel] {\(\sim\)} (h-0);
    \draw[concl,->] (h-0) -- (h-1);
    \draw[concl] (1-1) to[bend left=45] node[right,rel] {\(\equiv\)} (h-1);
    \draw[concl,densely dashed] (2-0) to[bend right=15] node[right,rel] {\(\sim\)} (h-1);
    \draw[hyp,->] (1-3) -- (1-4);
    \draw[hyp,->] (2-3) -- (2-4);
    \draw[hyp] (1-3) to[bend right=45] node[left,rel] {\(\equiv\)} (h-3);
    \draw[hyp,densely dashed] (2-3) to[bend left=30] node[left,rel] {\(\sim\)} (h-3);
    \draw[concl,->] (h-3) -- (h-4);
    \draw[concl] (2-4) to[bend right=45] node[right,rel] {\(\equiv\)} (h-4);
    \draw[concl,densely dashed] (1-4) to[bend left=30] node[right,rel] {\(\sim\)} (h-4);
    \draw[hyp,->] (1-6) -- (1-7);
    \draw[hyp] (2-6) to[bend left=45] node[left,rel] {\(\equiv\)} (h-6);
    \draw[hyp,densely dashed] (1-6) to[bend right=30] node[left,rel] {\(\sim\)} (h-6);
    \draw[concl,densely dashed] (1-7) to[bend left=15] node[right,rel] {\(\sim\)} (h-6);
  \end{tikzpicture}
\end{center}

\Cref{fig:recomposition-diagrams:silent-strong} states that strongly-related executions proceed in lockstep.
In the example, we are currently executing \code{C1.f}, the states of \code{W} and \code{W½} are strongly related while the states of \code{W½} and \code{W'} are weakly related, and \code{W} takes a silent step.
We must prove that \code{W½} takes a silent step, and reestablish the relations: the new states of \code{W} and \code{W½} are strongly related, while the new state of \code{W½} is still weakly related to that of \code{W'}.

\Cref{fig:recomposition-diagrams:silent-weak} captures the fact that the weakly-related execution can step silently.
In the example, we are currently executing \code{C2.g}, the states of \code{W'} and \code{W½} are strongly related while the states of \code{W½} and \code{W} are weakly related, and \code{W} takes a silent step.
We must prove that this step preserves the weak relation, \ie that the new state of \code{W} is still related to the state of \code{W½}.
Informally, this diagram allows the weakly related execution to catch up to the other two.

Finally, \Cref{fig:recomposition-diagrams:event-swap} captures what happens during a step that produces an event: all executions must produce the same event, and the relations must be reestablished.
In the example, we are switching from \code{C1} to \code{C2}, which is exactly the boundary between the two parts that make \code{W½} up, and as a result the relations are swapped: before the call, the states of \code{W} and \code{W½} were strongly related and the states of \code{W½} and \code{W'} were weakly related, and now it must be the opposite: the new states of \code{W} and \code{W½} are now weakly related, and the states of \code{W½} and \code{W'} are now strongly related.

In particular, \Cref{fig:recomposition-diagrams:silent-weak} is the key to understanding why blame-aware recomposition works: running \(\trg{C} \linktrg \trg{P}\) can only exhibit undefined behaviour within \(\trg{C}\), and the state of \(\trg{C} \linktrg \trg{P}\) is then weakly related to the recomposed one.
This means what happens in that execution doesn't really matter: what matters is that the weak and mixed relations are preserved, and those only need to be precise enough to prove \Cref{fig:recomposition-diagrams:event-swap} when needed, which in practice lets us use relations that don't mention the internal state of the weakly related execution in a way that would be invalidated by undefined behaviour happening.

Together with a proof that the initial states satisfy the simulation relations, these diagrams can be composed one by one to imply the existence of the recomposed execution, and so they imply blame-aware recomposition.

\textbf{Defining the Simulation Relations}
We can now describe how we instantiate these three simulation relations concretely.
In our micro-policy machine, states are triples \(s = (m, \mathit{regs}, \mathtt{PC})\), and we use notations such as \(\mathit{colour}\) or \(\mathit{code}\) to inspect fields of the tags.

How do we define the strong relation? First, the strong and weak relations should only describe one side of the split between partial program and context, so we parameterize them with \(\mathbb{I} = \mathbb{P} | \mathbb{C}\), and we use \(\equiv_{\mathbb{P}}\) and \(\sim_{\mathbb{C}}\) when the execution is in \(\mathbb{P}\), and \(\equiv_{\mathbb{C}}\) and \(\sim_{\mathbb{P}}\) when the execution is in \(\mathbb{C}\), as depicted below.
\begin{center}
  \begin{tikzpicture}[
    st/.style={circle,fill,inner sep=0pt,minimum size=2pt},
    tr/.style={->,shorten >=2pt,shorten <=2pt},
    rl/.style={densely dotted,shorten >=1.5pt,shorten <=1.5pt},
    rel/.style={font=\scriptsize,inner sep=1.5pt}]
    \foreach \r/\name/\yb/\ya in {1/{\trg{C} \linktrg \trg{P}}/0/0, h/{\trg{C'} \linktrg \trg{P}}/-0.6/-1.0, 2/{\trg{C'} \linktrg \trg{P'}}/-1.6/-1.6} {
      \node[left,font=\footnotesize] at (-0.2,\yb) {\(\name\)};
      \foreach \i/\x in {0/0, 1/0.8, 2/1.6}
        \node[st] (\r-\i) at (\x,\yb) {};
      \foreach \i/\x in {3/2.6, 4/3.4, 5/4.2}
        \node[st] (\r-\i) at (\x,\ya) {};
      \foreach \u/\v in {0/1, 1/2, 2/3, 3/4, 4/5}
        \draw[tr] (\r-\u) -- (\r-\v);
    }
    \draw[rl] (1-1) -- node[rel,right] {\(\equiv_{\mathbb{P}}\)} (h-1);
    \draw[rl] (h-1) -- node[rel,right] {\(\sim_{\mathbb{C}}\)} (2-1);
    \draw[rl] (1-4) -- node[rel,right] {\(\sim_{\mathbb{P}}\)} (h-4);
    \draw[rl] (h-4) -- node[rel,right] {\(\equiv_{\mathbb{C}}\)} (2-4);
  \end{tikzpicture}
\end{center}

We define \(\equiv_{\mathbb{I}}\) as follows:
\anndecl{rel:1}{1}{}\anndecl{rel:2}{2}{}\anndecl{rel:3}{3}{}%
\anndecl{rel:4}{4}{}%
\begin{mathpar}
  \mprset{vskip=1pt}
  \inferrule{
    \annbox{1}{1}{\(\mathit{colour}(\mathtt{PC}) = \mathit{colour}(\mathtt{PC}') = c \quad c \in \mathbb{I}\)} \\
    \annbox{2}{2}{\(\mathtt{PC}' = \mathtt{PC} + \mathit{offset}(c)\)} \\
    \annbox{3}{3}{\(\mathit{regs} \simeq_{\mathbb{I}} \mathit{regs}'\)} \\
    \annbox{4}{4}{\(m \simeq_{\mathbb{I}} m'\)}
  }{
    (m, \mathit{regs}, \mathtt{PC}) \equiv_{\mathbb{I}} (m', \mathit{regs}', \mathtt{PC}')
  }
\end{mathpar}
During the simulation, strongly related states are always in the same compartment, one that belongs to \(\mathbb{I}\), as captured by condition~\cref{rel:1}.
More than this, strongly related states are always executing the same instruction, since they follow the same code.
However, the condition \(\mathtt{PC}' = \mathtt{PC}\) would be incorrect.
Indeed, the code layout might have changed, and so actually there is a fixed offset for each compartment \(c\), \(\mathit{offset}(c)\), between the original programs and the recomposed one.
Hence condition~\cref{rel:2}.

Then, to know that the same instructions are going to lead to the same results (otherwise the two strongly related executions would not be in lockstep), we need to know that the content of the registers and of the memory is the same: this is captured by \cref{rel:3} and \cref{rel:4}.
The relations \(\simeq_{\mathbb{I}}\) on registers and memories relate the content of these registers and memories: for instance, we define \(\simeq_{\mathbb{I}}\) on memories to be:
\begin{align*}
  m \simeq_{\mathbb{I}} m' \triangleq (\forall w\,d.~ &\mathit{colour}(d)\in\mathbb{I} \wedge \neg \mathit{code}(d) \implies\\
  &m(w) = d \implies m'(w) = d)\,\wedge \\
  (\forall w\,d'.~ &\mathit{colour}(d')\in\mathbb{I} \wedge \neg \mathit{code}(d') \implies\\
  &m'(w) = d' \implies m(w) = d') \wedge
  \dots
\end{align*}
We have elided some well-formedness predicates on the memories and registers that state that the return addresses stored in registers and in memory point to code belonging to the same compartment.
Any piece of data in \(m\) belonging to a compartment that is in \(\mathbb{I}\) can be found at the same address in \(m'\): this captures the fact that strongly related compartments have done the exact same sequence of memory operations, and is enabled by the allocation mechanism hiding the data layout of each compartment.

The weak relation is defined in a similar manner:
\begin{mathpar}
  \mprset{vskip=1pt}
  \inferrule{
    \annbox{1}{1}{\(\mathit{colour}(\mathtt{PC}) = \mathit{colour}(\mathtt{PC}') = c \quad c \notin \mathbb{I}\)} \\
    \annbox{4}{4}{\(m \simeq_{\mathbb{I}} m'\)}
  }{
    (m, \mathit{regs}, \mathtt{PC}) \sim_{\mathbb{I}} (m', \mathit{regs}', \mathtt{PC}')
  }
\end{mathpar}
Note that we keep \cref{rel:1} (except that the states that are related are the ones not executing) and \cref{rel:4}, but we drop the requirements on the PC and on the registers.
Indeed, in the weakly related states, the execution doesn't proceed in lockstep (recall the diagrams from the previous section), and so the internal details don't need to be synchronized.
We keep conditions on the memory, but note that the weak relation mentions state outside the one that is executing.

The key takeaway for both these relations is that they only quantify over the parts of the programs that are actually kept in the recomposed execution:
when the execution is inside \(\mathbb{P}\), then \(\equiv_{\mathbb{P}}\) is used for the strong relation and relates the parts of the state that belong to \(\trg{P}\), and \(\sim_{\mathbb{C}}\) is used for the weak relation and relates the parts of the state that belong to \(\trg{C'}\).

\anndecl[1]{rel:a}{}{a}\anndecl[2]{rel:b}{}{b}%
\begin{mathpar}
  \mprset{vskip=1pt}
  \inferrule{
    \annbox{1}{a}{\(\mathit{tag}(\mathtt{PC}) = \mathit{tag}(\mathtt{PC}') = \mathit{tag}(\mathtt{PC}'') = (\mathtt{Level\ n}, c)\)} \\
    \annbox{2}{b}{\(\mathbb{S}_n(s, m, m', m'')\)} \\
    \dots
  }{
    \mathcal{M}_{s}\;(m, \mathit{regs}, \mathtt{PC})\;(m', \mathit{regs}', \mathtt{PC}')\;(m'', \mathit{regs}'', \mathtt{PC}'')
  }
\end{mathpar}
Finally, we also have a mixed relation, which is parameterized by a shadow stack \(s\), where stack frames quadruples containing three return addresses and one colour.
The shadow stack keeps track of the cross-compartment return addresses, ensuring they are in sync.
This is captured by the two conditions \cref{rel:a}, which enforce that the stack is at the same level in all three executions, and \cref{rel:b}, which is defined by the two following rules.
\anndecl[3]{rel:c}{}{c}%
\anndecl[4]{rel:d}{}{d}%
\begin{mathpar}
  \mprset{vskip=1pt}
  \inferrule{ }{
    \mathbb{S}_0(\cdot, m, m', m'')
  }
  \and
  \inferrule{
    \annbox{3}{c}{\(\begin{array}{@{}c@{}}
      c \in \mathbb{P} \implies v_c = v \qquad
      c \in \mathbb{C} \implies v_c = v' \\
      v'' = v_c + \mathit{offset}(c)
    \end{array}\)} \\\\
    \annbox{4}{d}{\(\mathit{code}_c(m, v) \quad \mathit{code}_c(m', v') \quad \mathit{code}_c(m'', v'')\)} \\\\
    \mathbb{S}_n(s, m, m', m'')
  }{
    \mathbb{S}_{n+1}((v, v', v'', c) :: s, m, m', m'')
  }
\end{mathpar}
The rule for the empty stack is clear.
For adding a stack frame on the stack, the rule is slightly more involved.
Conditions \cref{rel:d} capture the fact that the return addresses point to code inside \(c\).
Then, following \cref{rel:c}, how the return addresses are related depends on which compartment they target.
If they target a compartment in \(\mathbb{P}\), then the return address \(v''\) in the recomposed execution must be related to \(v\), so that they point to the same instruction (with an offset because the code layout has changed).
Conversely, if they target a compartment in \(\mathbb{C}\), then \(v''\) must be related to \(v'\), not \(v\).

Reasoning about these three relations actually requires reasoning about the compartmentalization mechanism that is implemented: for instance, one of the key invariants that we need to establish to prove blame-aware recomposition is the uniqueness of the linear return capabilities.
Another one is that we need to reason about the memory layout and consider various edge cases, such as adding conditions on the size of the code (it should not be too large or it would overflow into the data region).
This supports our argument that blame-aware recomposition allows us to reason about a lower-level backend compared to the previous proof technique.

\subsection{Formalization}
We have implemented the allocator changes to the micro-policy backend of Abate \etal.
We have used our technique to prove \(\rscdcmd\) between our overlay semantics and the low-level semantics of our updated backend.
We have also proved \(\rscdcmd\) between the source language of Abate \etal and our updated micro-policy backend, under a few additional assumptions, mainly that the source executions do not go beyond the limits of the machine (our source uses unbounded integers, whereas our backend uses concrete 32-bit integers).
Everything is done within the Rocq proof assistant, and the full development can be found as supplementary materials.

\section{Related Work}\label{sec:related}

We build on top of the work of Abate \etal~\cite{AbateABEFHLPST18,Abate_2018_arxiv}.
Researchers have since developed various other extensions of this work focused on adding new features or proving stronger results. For instance, SECOMP~\cite{ThibaultBLAAGHT24} focuses on scaling to a realistic source language, by extending CompCert~\cite{Leroy06} with a compartmentalization mechanism and proving it secure. Thibault \etal~\cite{ThibaultLH25} propose an extension of the back-translation that lets them prove a stronger security result that requires code confidentiality. El-Korashy \etal~\cite{secureptrs} add the possibility for components to share memory by exchanging pointers. Both of these extensions are seemingly incompatible with our work: the first one requires recomposition to hold at the lowest level considered, while with the second one, exchanging pointers irremediably leaks some of the components' internal behaviour if they can be freely compared.

Building and verifying low-level backends for compilers is a long-standing research problem.
The state-of-the-art verified C compiler, CompCert~\cite{Leroy06,Leroy09b}, targets relatively high-level backends that use a block-based memory model.
As a result, it is impossible to model many behaviours such as buffer overflows or stack overflows.
Various extensions of CompCert make progress towards more realism: CompCertMC~\cite{WangWS19} extends CompCert with lower-level backends that have flat memory spaces. Its successor CompCertELF~\cite{WangXWS20} adds a verified assembler that produces ELF object files, concatenating the code together with data.
However, all these works focus on correct compilation only, and in particular they do not support linking with arbitrary code.

Other verification projects have focused on low-level compartmentalization, and in particular the verification of SFI schemes, either at the assembly level~\cite{MorrisettTTTG12,ZhaoLSR11} or at higher levels~\cite{BessonBDJW19,KrollSA14}.
The techniques SFI uses to isolate different compartments' memory regions from each other are similar to the changes we make to the backend's memory allocation scheme.
In our setting, however, because we have access to a monitor and a tagging mechanism, we do not need to introduce all the complexity that is inherent to SFI-based solutions.

Overlay semantics were introduced by Skorstengaard \etal~\cite{SkorstengaardDB19} in the context of secure compilation, to formally characterize the security of their calling convention: their overlay semantics has built-in local-state encapsulation and well-bracketed control flow.
Later, Georges \etal~\cite{GeorgesTB22} also added temporal elements to their overlay semantics, allowing them to characterize some temporal aspects of the stack safety.
In both works, the authors prove that their concrete implementation of the calling convention is fully abstract \wrt their overlay semantics.
In~\cite{KolosickNJWLGJS22}, Kolosick \etal use overlay semantics to characterize the well-behavedness of Wasm programs. They use this semantics to justify the correctness of the removal of some protections in the Wasm binaries.
The main difference between all these works and ours is that while they use overlay semantics as a characterization of the expected behaviour of their programs, we use them merely as a proof technique. More plainly, our \(\rscdcmd\) theorem is a direct statement about the kind of properties that are preserved by the compiler, while one needs to understand and trust their overlay semantics in order to understand their theorems.

\section{Conclusion and Future Work}\label{sec:future}

We have proposed a proof technique for verifying low-level secure compiler backends.
Our proof technique is a variant of a previous proof technique from the literature, and relies on blame-aware recomposition, a technical lemma that allows recomposing programs together.
The idea behind it is to characterize some problematic behaviours for the
previous proof technique as undefined behaviours \wrt an overlay semantics, and
to exploit the fact that some
of the programs that are manipulated are actually
compiled, and thus free of such undefined behaviours.
We have reported on the application of that proof technique to an existing backend using micro-policies (essentially tag-based monitors) as a protection mechanism to implement compartmentalization.

There are still several questions that we believe are worth studying, and it would be nice to investigate how far one can take our proof technique, in different directions.

For instance, how does the proof technique scale to larger and more realistic projects? SECOMP~\cite{ThibaultBLAAGHT24} showed that it is possible to prove \(\rscdcmd\) for a realistic compartmentalizing compiler; is it still the case if one wishes to go lower-level?
Another question is whether we could adapt our proofs to the SFI backend of Abate \etal.
The difficulty here is that the invariants of SFI are much more complex than ours, since we benefit from having fine-grained tags.

Another direction would be to study what criteria can be proved with this proof technique. The one we consider in this work, \(\rscdcmd\), does not depend on the confidentiality of the data or the code being run.
Yet, the backend must still enforce some confidentiality of data for the proof technique to be applicable; can one then also obtain a stronger security criterion, in the manner of~\cite{ThibaultLH25}?

A final direction is what features can be added to the languages while preserving the ability to apply the proof technique. Memory sharing~\cite{akram-capabileptrs,secureptrs} is one such feature: while it would be desirable to share memory directly between compartments, directly sharing pointers might re-introduce leaks concerning the allocation pattern of compartments.

Finally, as we mentioned, recomposition and blame-aware recomposition still rely on code confidentiality, or at least the ability to build an overlay semantics with code confidentiality. It would be more satisfying to have a direct proof technique that doesn't require this.

\newcommand{\aiusage}{%
GenAI tools were not used to write this paper, except for fixing grammar and spelling, for improving readability and tightening some existing paragraphs, for LaTeX programming, and for generating some of the figures from pen-and-paper sketches.
Most of the proof work, including the generic proof technique and the main recomposition diagrams, was initially completed by the authors without GenAI tools.
Claude Opus was later used to prove auxiliary technical lemmas (results about machine word arithmetic and bitwise operations) as well as to prove symmetric variants of some of the diagrams and clean up proofs.
}
  \section*{Acknowledgments}
  This work was in part funded by the European Research Council under ERC Starting Grant SECOMP (715753),
  and by the Deutsche Forschungsgemeinschaft (DFG, German Research Foundation)
  under Germany's Excellence Strategy -- EXC 2092 CASA -- 390781972.
  This research is supported by armasuisse Science and Technology.

  \paragraph{AI Usage}
  \aiusage

\bibliographystyle{abbrvnaturl}
\bibliography{paper,mp,safe}
\end{document}